\definecolor{blu3}{rgb}{.1,.0,.4}
\newcommand*\patchAmsMathEnvironmentForLineno[1]{%
  \expandafter\let\csname old#1\expandafter\endcsname\csname #1\endcsname
  \expandafter\let\csname oldend#1\expandafter\endcsname\csname end#1\endcsname
  \renewenvironment{#1}%
     {\linenomath\csname old#1\endcsname}%
     {\csname oldend#1\endcsname\endlinenomath}}%
\newcommand*\patchBothAmsMathEnvironmentsForLineno[1]{%
  \patchAmsMathEnvironmentForLineno{#1}%
  \patchAmsMathEnvironmentForLineno{#1*}}%
\newtheorem{theorem}{Theorem}
\newtheorem{lemma}[theorem]{Lemma}
\def\DEF#1{\textbf{\emph{#1}}}
\def\eps{\varepsilon}
\DeclareMathOperator{\loglog}{loglog}
\DeclareMathOperator{\sky}{sky}
\DeclareMathOperator{\opt}{opt}
\DeclareMathOperator{\next}{succ}
\DeclareMathOperator{\prev}{pred}
\DeclareMathOperator{\nrp}{nrp}
\def\output{\emph{output}}
\begin{document}

\title{Faster Distance-Based Representative Skyline and\\
	$k$-Center Along Pareto Front in the Plane}

\author{Sergio Cabello\thanks{Faculty of Mathematics and Physics, University of Ljubljana, Slovenia, 
	and Institute of Mathematics, Physics and Mechanics, Slovenia. 
	Supported by the Slovenian Research Agency (P1-0297, J1-9109, J1-1693, J1-2452).
	Email address: \texttt{sergio.cabello@fmf.uni-lj.si}.}}

\maketitle

\begin{abstract}
  We consider the problem of computing the \emph{distance-based representative skyline} in the plane,
  a problem introduced by Tao, Ding, Lin and Pei [Proc. 25th IEEE International Conference on Data 
  Engineering (ICDE), 2009] and independently considered by 
  Dupin, Nielsen and Talbi [Optimization and Learning - Third International Conference, OLA 2020] 
  in the context of multi-objective optimization.
  Given a set $P$ of $n$ points in the plane and a parameter $k$, the task is to select $k$
  points of the skyline defined by $P$ (also known as Pareto front for $P$)
  to minimize the maximum distance from the points of the skyline to the selected points.  
  We show that the problem can be solved in $O(n\log h)$ time, where $h$ is the 
  number of points in the skyline of $P$. 
  We also show that the decision problem can be solved in $O(n\log k)$ time and
  the optimization problem can be solved in $O(n \log k  + n \loglog n)$ time.
  This improves previous algorithms and is optimal for a large range of values of $k$. 
 
    \medskip
    \textbf{Keywords:} skyline, Pareto front, clustering, algorithm, $k$-center
\end{abstract}

%%%%%%%%%%%%%%%%%%%%%%%%%%%%%%%%%%%%%%%%%%%%%%%%%%%%%%%%%%%%%%%%%%%%%%%%%%%%%%%
\section{Introduction}
For each point $p$ in the plane, let $x(p)$ and $y(p)$ denote its $x$- and $y$-coordinate, respectively.
Thus $p=(x(p),y(p))$.
A point $p$ \DEF{dominates} a point $q$ if $x(p)\ge x(q)$ and $y(p)\ge y(q)$.
Note that a point dominates itself.
For a set of points $P$, its \DEF{skyline} is the subset of points of $P$ that are not
dominated by any other point of $P$.
We denote by $\sky(P)$ the set of skyline points of $P$. Formally
\[
	\sky(P) = \{ p\in P\mid \forall q\in P\setminus\{ p\}:~ x(q)<x(p) \text{ or } y(q)<y(p)\}.
\]
See the left of Figure~\ref{fig:skyline} for an example.
The skyline of $P$ is also called the \DEF{Pareto front} of $P$.
A set $P$ of points is a skyline or a Pareto front if $P=\sky(P)$.
Note that some authors exchange the direction of the inequalities, depending
on the meaning of the coordinates in the application domain.

\begin{figure}
\centering
	\includegraphics[page=1,scale=1.1]{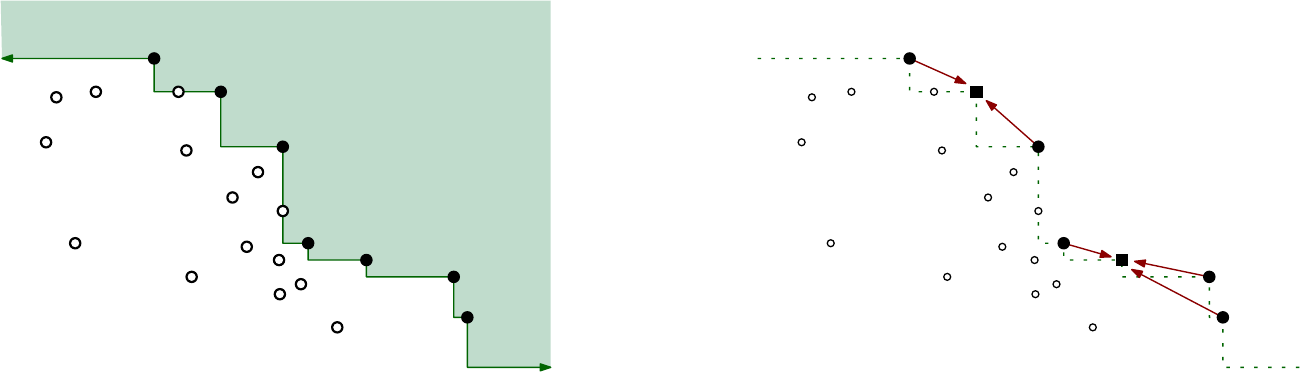}
	\caption{Left: Example of point set $P$ with
				the points of $\sky(P)$ marked as filled dots;
				the shaded region has to be empty of points of $P$.
			Right: if $Q$ is the two-point set marked with squares,
				then the length of the longest arrow is $\psi(Q,P)$.}
	\label{fig:skyline}
\end{figure}

For each subset of points $Q\subseteq \sky(P)$ we define
\[
	\psi(Q,P) := \max_{p\in \sky(P)}~~ \min_{q\in Q} ~~d(p,q),
\]
where $d(p,q)$ denotes the \DEF{Euclidean distance} between $p$ and $q$.
Note that $\psi(Q,P) = \psi(Q,\sky(P))$ and $\psi(\sky(P),\sky(P))=0$.
An alternative point of view for $\psi(Q,P)$ is the following:
it is the smallest value $\lambda$ such
that disks centered at the points of $Q$ with radius $\lambda$
cover the whole $\sky(P)$. 
See Figure~\ref{fig:skyline}, right.
One can consider $\psi(Q,P)$ as how big error we make when approximating $\sky(P)$ by $Q$.

In this paper we provide efficient algorithms for the following optimization problem: 
for a given point set $P$ in the plane and a positive integer $k$, compute
\[
	\opt(P,k) :=  \min \{ \psi(Q,P)\mid Q\subseteq \sky(P) \text{ and } |Q|\le k \}.
\]
A \DEF{feasible solution} is any subset $Q\subseteq \sky(P)$ with $|Q|\le k$.
An \DEF{optimal solution} for $\opt(P,k)$ is a feasible solution $Q^*$ 
such that $\psi(Q^*,P)=\opt(P,k)$. Thus, we use $\opt(P,k)$ both as the optimal
value and the description of the problem.
Note that $\opt(P,k)=\opt(\sky(P),k)$.

We will also consider the \DEF{decision problem}:
given a set $P$ of points in the plane, a positive integer $k$ and a real value $\lambda\ge 0$,
decide whether $\opt(P,k)\le \lambda$ or $\opt(P,k)> \lambda$.

\paragraph{Motivation.}
Skylines (or Pareto fronts) can be considered for arbitrary dimensions and play a basic
role in several areas.
B{\"{o}}rzs{\"{o}}nyi, Kossmann and Stocker~\cite{BorzsonyiKS01} advocated their
use in the context of databases, where they have become an ubiquitous tool; 
see for example~\cite{Peng2016,StefanidisKP11}.
They also play a key role in multiobjective optimization and decision analysis;
see for example the surveys~\cite{EmmerichD18,LiY19,0001MBC14}.

Quite often the skyline contains too many points. The question is then how to select
a representative subset of the points in the skyline.
The representative may be a final answer or just an intermediary step,
as it happens for example in evolutionary algorithms, where diversity plays a key role.

As one can imagine, several different meaningful definitions of \emph{best representative}
subset are reasonable, and different applications advocate for different measures
on how good the representatives are. The use of randomization is also relevant in the context
of evolutionary algorithms.

The problem we are considering, $\opt(P,k)$, 
was introduced in the context of databases by Tao et al.~\cite{TaoDLP09}
under the name of \emph{distance-based representative} of $\sky(P)$.
Although we have presented the problem in the plane, it can
be naturally extended to arbitrary dimensions.
Tao et al. argue the benefit of using such representatives.
One of the key properties is that it is not sensible to the density of $P$,
a desirable property when $P$ is not uniformly distributed.

The very same problem, $\opt(P,k)$,
was considered in the context of optimization
by Dupin, Nielsen and Talbi~\cite{DupinNT20}, who noted the connection to 
clustering, multiobjective optimization and decision analysis. 
They noticed that $\opt(P,k)$ is the (discrete) $k$-center problem
for $\sky(P)$.
Clustering problems in general, and $k$-center problem in particular~\cite{centers,HochbaumS85} 
are fundamental in several areas;
thus it is relevant to find particular cases that can be solved efficiently.

Another very popular and influential measure to select representatives
in the context of databases was introduced by Lin et al.~\cite{LinYZZ07}. 
In this case, we want to select $k$ points of $\sky(P)$
so as to maximize the number of points from $P$ dominated by some of the selected points.
See our work~\cite{ChoiCA19} for a recent algorithmic improvement in the planar case.
In the context of evolutionary algorithms one of the strongest measures
is the hypervolume of the space dominated by the selected points~\cite{BeumeNE07}
with respect to a reference point, that can be taken to be the origin.
As it can be seen in the survey by Li and Yao~\cite{LiY19}, an overwhelming amount
of different criteria to select representatives from $\sky(P)$ has been considered
in the area of multiobjective optimization.

\paragraph{Previous algorithms.}
We will be using three parameters to analyze the time complexity of algorithms.
We will use $n$ for the size of $P$, $k$ for the bound on the number of points to be selected,
and $h$ for the number of points in $\sky(P)$.
Unless explicitly mentioned, the discussion is for the plane.

We start reviewing the computation of the skyline.
Kung, Luccio and Preparata~\cite{KungLP75} showed that the skyline of $P$ in the plane
can be computed in $O(n \log n)$ time, and this is asymptotically optimal.
Kirkpatrick and Seidel~\cite{KirkpatrickS85} noted that the lower bound of $\Omega(n \log n)$
does not hold when $h$ is small, and
they provided an algorithm to compute $\sky(P)$ (in the plane) in $O(n \log h)$ time.
Nielsen~\cite{Nielsen96} provided an algorithm to compute the top $k$ skylines,
that is, $\sky(P)$, $\sky(P\setminus \sky(P))$, $\dots$ up to $k$ iterations.
The running time is $O(n \log H)$, where $H$ is the size of the top $k$ skylines together.

Kirkpatrick and Seidel~\cite{KirkpatrickS85} also showed that computing the skyline of $P$ 
takes $\Omega(n \log h)$ time in the algebraic decision tree model. 
In fact, they showed that, for a given $h'$, deciding whether $P$ has exactly $h'$
points in the skyline takes $\Omega( n\log h')$ time.
This implies that finding any $k$ points in the skyline takes $\Omega (n \log k)$ time.
Indeed, if we could find $k$ points in the skyline in $o(n\log k)$, then
we could run the test for $k=h'$ and $k=h'+1$ and decide in $o(n \log(k+1))=o(n\log h')$
time whether $\sky(P)$ has exactly $h'$ points, contradicting the lower bound.

Now we move onto computing $\opt(P,k)$.
Tao et al.~\cite{TaoDLP09} showed that the problem can be solved in $O(kh^2)$ assuming
that the skyline is available and sorted by $x$-coordinate. Thus, it takes $O(n\log h + kh^2)$ time.
In the full version of the work~\cite{TaoLDLP_long} they improved the time
bound of the second phase to $O(kh)$, which implies a time bound of $O(n\log h + kh)$.
They showed that the 3d version of the problem is NP-hard.

Dupin, Nielsen and Talbi~\cite{DupinNT20} solve the $\opt(P,k)$ problem in 
$O(kh\log^2h)$ time, again assuming that the skyline is available and sorted.
Thus, starting from $P$, the time bound is $O(n\log h + kh\log^2h)$.
They also provide a linear-time algorithm for $\opt(P,1)$ and an
algorithm with running time $O(h\log h)$ for $\opt(P,2)$, again assuming that the 
skyline is available and sorted by $x$-coordinate.

Mao et al.~\cite{MaoCLYL17} provided an exact algorithm for 
$\opt(P,k)$ that takes $O(k^2 \log^3 h)$ time, assuming that the skyline is stored
for binary searches, and an approximation with error of $\eps$ in $O(k \log^2 h \log(T/\eps))$,
where $T$ is the distance between the extreme points of the skyline.
Again, if we start from $P$, we should add the time $O(n \log h)$ to both algorithms to
compute the skyline.
The decision problem is solved in linear time in the work by
Yin and Wei and Liu~\cite{YinWL19}, assuming that the skyline is available.

If the points on the skyline are collinear and sorted,
we can compute $\opt(P,k)$ in linear time
using the algorithm of Frederickson~\cite{Frederickson91,Frederickson91b}.
See~\cite{abs-1711-00599} for a recent account that may be easier to understand.
The author believes that the same approach can be used to
solve $\opt(P,k)$ in linear time, once the skyline is available sorted.
However, as the details in those algorithms are very complicated,
the author is reluctant to make a firm claim.

\paragraph{Our contribution.}
We show that $\opt(P,k)$ can be solved in $O(n \log h)$ time.
We provide a quite simple algorithm for this. 
Compared to a possible adaptation of~\cite{Frederickson91,Frederickson91b},
we get the added value of simplicity within the same time
bound that we need to compute the skyline itself.
If the skyline is already available, the running time becomes $O(h \log h)$.
In all cases we improve all previous works considering algorithms for the problem $\opt(P,k)$
explicitly~\cite{DupinNT20,MaoCLYL17,TaoDLP09,TaoLDLP_long}.

At first, this may seem optimal because computing the skyline of $P$ takes
$\Omega( n\log h)$ time.
However, do we really need to compute the skyline? 
After all, we only need to select a particular subset of $k$ points from the skyline, 
and this has a lower bound of $\Omega(n\log k)$ time, as mentioned above.
We show that the decision problem for $\opt(P,k)$ 
can be solved in $O(n\log k)$ time. 
This is asymptotically optimal if we want to find a solution because 
such a solution has to find $k$ points in $\sky(P)$.
In fact, we show that with some additional preprocessing we
can solve several decision problems. For example, we can solve $O(k^2)$
decision problems in $O(n\log k)$ time altogether.
We combine the decision problem with parametric search and selection
in sorted arrays to show that 
$\opt(P,k)$ can be solved in $O(n\log k + n\loglog n)$ time.
This is asymptotically optimal whenever $\log k = \Omega(\loglog n)$,
if we want to find also an optimal solution.

Finally, we provide additional results for the case when $k$ is very small.
The driving question here is, for example, how fast can we solve $\opt(P,15)$.
We show that a $(1+\eps)$-approximation can be computed in 
$O(kn + n\loglog (1/\eps))$ time.
We obtain this by finding a $2$-approximation in $O(kn)$ time and then making
use of repetitive binary searches using the decision problem.

Besides improving previous algorithms,
we believe that the conceptual shift of solving $\opt(P,k)$ without 
computing $\sky(P)$ explicitly is a main contribution of our work.
Nevertheless, even when $\sky(P)$ is available, we improve
previous works, with the possible exception of~\cite{Frederickson91}.
We provide very detailed description of our algorithms
to indicate that they are implementable.

We start the paper discussing in detail an optimal algorithm for computing
the skyline in $O(n\log h)$ time.
The algorithm uses the techniques from Chan~\cite{Chan96} and Nielsen~\cite{Nielsen96}.
The purpose is two-fold. First, we believe that the presentation of this
part is pedagogical, slightly simpler than the presentation by Nielsen
because we try to solve a simpler problem, and may find interest within
the community interested in computing with skylines in the plane.
Second, it provides the basic idea for our later approach, and thus it should
be discussed to some level of detail in any case.
We decided to provide a self-contained presentation of this part.

\paragraph{Organization of the paper.}
In Section~\ref{sec:basictools} we provide some basic observations.
In Section~\ref{sec:skyline} we explain a simple, optimal algorithm
to compute the skyline.
Section~\ref{sec:optimization_skyline} is devoted to the problem
of finding the skyline representatives through the computation of
the skyline, while Section~\ref{sec:optimization_no_skyline}
looks at the problem without computing the skyline.
Results for very small $k$ are considered in Section~\ref{sec:smallk}.
We conclude in Section~\ref{sec:conclusions},
where we also provide further research questions.

%%%%%%%%%%%%%%%%%%%%%%%%%%%%%%%%%%%%%%%%%%%%%%%%%%%%%%%%%%%%%%%%%%%%%%%%%%%%%%%%%%%%%%%%%%%%%%%%%%%%%%%%%%%%%%%%%%%%%%%%%%%%%%%%%%%%%%%%%%%%%%%%%%%%%%%%%%%%%%%%%%%%%%%%%%%%%%%%%%%%%%%%%%%%%%%%%%%%%%%%%%%%%%%%%%%%%%%%%%%%%%%%%%%%%%%%%%%%%%%%
\section{Basic tools}
\label{sec:basictools}
We are usually interested in the order of the points along the skyline $\sky(P)$. 
For this we store the points of $\sky(P)$ sorted by increasing $x$-coordinate 
in an array. We could also use any other data structure that allows us to make 
binary searches among the elements of $\sky(P)$, such as a balanced binary search tree. 
Note that sorting $\sky(P)$ by $x$- or by (decreasing) $y$-coordinate is equivalent.

Quite often we use an expression like ``select the 
highest point among $q_1,\dots,q_t$, breaking ties in favor of larger $x(\cdot)$''.
This means that we take the point among $q_1,\dots,q_t$ with largest $y$-coordinate;
if more points have the largest $y$-coordinate $y_{\max}$, then
we select among those with $y(q_i)=y_{\max}$ 
the one with largest $x$-coordinate.
This can be done iterating over $i=1,\dots, t$, storing the best point $q_*$ so far,
and updating  
\[
	q_*\qlet q_i \text{ if and only if }
	y(q_*)< y(q_i) \text{ or }~ (y(q_*)=y(q_i) \text{ and } x(q_*)<x(q_i) ).
\]
This wording appears to take care of the cases where more points have
the same $x$- or $y$-coordinate. 
Let us explain an intuition that may help understanding how ties
are broken.
Conceptually, think that each point $p=(x,y)\in P$ is replaced
by the point $p_\eps=(x+y\eps, y+x\eps)$ for an infinitesimally small $\eps>0$,
and let $P_\eps$ be the resulting point set.
A point $p\in P$ belongs to $\sky(P)$ if and only if $p_\eps$ belongs to $\sky(P_\eps)$. 
Moreover, no two coordinates are the same in $P_\eps$, if $\eps>0$ is sufficiently small.
Thus, whenever we have to break ties, we think what we would do for $P_\eps$.

For correctness and making binary searches we will often use the following
monotonicity property.
\begin{lemma}
\label{le:monotone}
	If $p,q,r$ are points of $\sky(P)$ and $x(p)<x(q)<x(r)$,
	then $d(p,q)<d(p,r)$.
\end{lemma}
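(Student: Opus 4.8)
The plan is to exploit the fact that the points of a skyline form an antichain under domination, which forces a strictly anti-monotone relationship between the two coordinates. The first step is to establish that for any two skyline points with $x(p)<x(q)$ one must have $y(p)>y(q)$. This follows directly from the definition of $\sky(P)$: since $p\in\sky(P)$, applying the defining condition to the point $q$ gives $x(q)<x(p)$ or $y(q)<y(p)$; as $x(q)>x(p)$ rules out the first disjunct, we are forced into $y(q)<y(p)$, i.e.\ $y(p)>y(q)$. Applying this observation to the chain $x(p)<x(q)<x(r)$ yields $y(p)>y(q)>y(r)$, so that all three points descend strictly in $y$ as they advance in $x$.

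With both coordinate gaps now known to be positive and strictly ordered, I would compare the squared Euclidean distances. Since $x(q)-x(p)>0$ and $y(p)-y(q)>0$, and likewise for $r$, we may write
\[
	d(p,q)^2 = (x(q)-x(p))^2 + (y(p)-y(q))^2, \qquad
	d(p,r)^2 = (x(r)-x(p))^2 + (y(p)-y(r))^2 .
\]
The inequalities $0< x(q)-x(p) < x(r)-x(p)$ and $0< y(p)-y(q) < y(p)-y(r)$ give a termwise strict comparison of the two sums, hence $d(p,q)^2 < d(p,r)^2$; taking square roots (monotone on the nonnegative reals) gives $d(p,q)<d(p,r)$, as claimed.

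Each step is an elementary consequence of the skyline definition together with the monotonicity of squaring, so I do not anticipate a genuine obstacle. The only point requiring care is the \emph{strictness} of $y(p)>y(q)$: this is precisely where the antichain hypothesis $p,q\in\sky(P)$ is used, ruling out the degenerate case $y(p)=y(q)$ (which would make $q$ dominate $p$ and contradict $p\in\sky(P)$). The consistency of this with the tie-breaking convention of Section~\ref{sec:basictools} is reassuring but not needed here, as the hypothesis $x(p)<x(q)<x(r)$ already supplies strict separation in $x$.
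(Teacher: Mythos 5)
Your proof is correct and follows essentially the same route as the paper: derive $y(p)>y(q)>y(r)$ from the antichain property of the skyline, then compare the squared distances termwise. The only difference is that you spell out why the $y$-coordinates are strictly decreasing, a step the paper's proof states without elaboration.
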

\begin{proof}
	From the hypothesis we have $y(p)>y(q)>y(r)$ and therefore
	\begin{align*}
		\bigl( d(p,q) \bigr) ^2 &= \bigl( x(q)-x(p) \bigr)^2 + \bigl( y(q)-y(p) \bigr)^2\\
		& < \bigl( x(r)-x(p) \bigr)^2 + \bigl( y(r)-y(p) \bigr)^2
		= \bigl( d(p,r) \bigr) ^2.
	\qedhere
	\end{align*}
\end{proof}

A first consequence of this lemma 
is that any disk centered at a point of $\sky(P)$
contains a contiguous subsequence of $\sky(P)$.
In particular, if $\sky(P)$ is 
stored for binary searches along $x$-coordinates,
then we can perform binary searches  
to find the boundary elements of $\sky(P)$ contained in 
any given disk \emph{centered at a point of} $\sky(P)$. 
(The claim is not true for disks centered at arbitrary points of $P$.)

\medskip

For an $x$-coordinate $x_0$ and a set of points $Q$, let $\next(Q,x_0)$
be the leftmost point of $Q$ in the open halfplane to the right of the vertical line $x=x_0$.
Similarly, we denote by $\prev(Q,x_0)$
the rightmost point of $Q$ in the open halfplane to the left of the vertical line $x=x_0$.
Thus
\begin{align*}
	\next(Q,x_0) &= \arg\min \{ x(q)\mid q\in Q,~ x(q)>x_0 \},\\
	\prev(Q,x_0) &= \arg\max \{ x(q)\mid q\in Q,~ x(q)<x_0 \}.
\end{align*}
We will use $\next(Q,x_0)$ and $\prev(Q,x_0)$ when $Q$ is a skyline;
see Figure~\ref{fig:next}.
In our use we will also take care that $\next(Q,x_0)$ and $\prev(Q,x_0)$
are well-defined, 
meaning that there is some point of $Q$ on each side of the vertical line $x=x_0$.

When $Q$ is a skyline stored for binary searches along $x$-coordinates,
the points $\next(Q,x_0)$ and $\prev(Q,x_0)$ can be obtained in $O(\log |Q|)$ time.

Looking into the case of skylines we can observe that
\begin{align*}
	\next(\sky(P),x_0)&= \arg\max \{ y(p)\mid p\in P,~ x(p)>x_0 \}, \\
	&~~~~~~~~~~~~~~~~~~~~\qquad\text{ breaking ties in favor of larger $x(\cdot)$},
\end{align*}
while using $y_0= \max \{ y(p)\mid p\in P,~ x(p)\ge x_0\}$
we have
\begin{align*}
	\prev(\sky(P),x_0) &= \arg\max \{ x(p)\mid p\in P,~ y(p)> y_0 \},\\
	&~~~~~~~~~~~~~~~~~~~~\qquad\text{ breaking ties in favor of larger $y(\cdot)$}. 	
\end{align*}

\begin{figure}
\centering
	\includegraphics[page=2,scale=1.1]{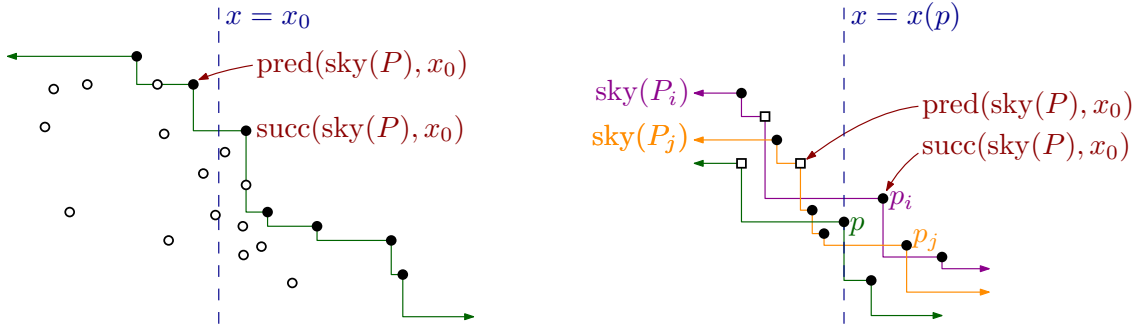}
	\caption{Left: $\next(\sky(P),x_0)$ and $\prev(\sky(P),x_0)$ for a set of points $P$.
			Right: Computing $\next(\sky(P),x_0)$ and $\prev(\sky(P),x_0)$ when
				we have $\sky(P_1),\dots,\sky(P_t)$ for $P=P_1\cup\dots\cup P_t$.
				The points $q_1,\dots,q_t$ are marked with empty squares.}
	\label{fig:next}
\end{figure}

\medskip

In several of our algorithms we will use a divide and conquer approach.
The following observation is one of the simple but key observations.
See Figure~\ref{fig:next}, right.

\begin{lemma}
\label{le:next}
	Assume that $P$ is the union of $P_1,\dots,P_t$.
	For each real value $x_0$, the point $\next(\sky(P),x_0)$ is 
	the highest point among $\next(\sky(P_1),x_0),\dots, \next(\sky(P_t),x_0)$,
	breaking ties in favor of larger $x(\cdot)$.
	Moreover, if for each $i\in \{1,\dots,t\}$ the points of $\sky(P_i)$ 
	are sorted by $x$-coordinate for binary searches, 
	then we can compute $\next(\sky(P),x_0)$ in $O(t+\sum_i \log |P_i|)$ time.
\end{lemma}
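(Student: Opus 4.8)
The plan is to reduce everything to the characterization of $\next(\sky(P),x_0)$ already recorded in the excerpt, namely that $\next(\sky(P),x_0)$ is the highest point of $\{p\in P: x(p)>x_0\}$, breaking ties in favor of larger $x(\cdot)$. To handle the tie-breaking uniformly, I would introduce the total order $\prec$ on points defined by $p\prec p'$ iff $y(p)<y(p')$, or $y(p)=y(p')$ and $x(p)<x(p')$. With this notation, both $\next(\sky(P),x_0)$ and each $\next(\sky(P_i),x_0)$ are simply the $\prec$-largest element of the respective set of points lying strictly to the right of the line $x=x_0$. I would also recall that under the $\eps$-perturbation convention the maximizer is unique, so these $\arg\max$ expressions are well defined and $\prec$ behaves as a genuine total order on the relevant points.

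For correctness I would argue as follows. Write $R:=\{p\in P: x(p)>x_0\}$ and $R_i:=\{p\in P_i: x(p)>x_0\}$. Since $P=P_1\cup\dots\cup P_t$, we have $R=R_1\cup\dots\cup R_t$. The $\prec$-maximum of a union equals the $\prec$-maximum of the pieces' maxima; that is, $\max_\prec R=\max_\prec\{\max_\prec R_i : R_i\neq\emptyset\}$. Since $\max_\prec R=\next(\sky(P),x_0)$ and $\max_\prec R_i=\next(\sky(P_i),x_0)=q_i$, this says exactly that $\next(\sky(P),x_0)$ is the $\prec$-largest of $q_1,\dots,q_t$, i.e.\ the highest among them with ties broken in favor of larger $x(\cdot)$. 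The one point to be careful about is that $q_i$ is undefined precisely when $R_i=\emptyset$; such indices are dropped from the maximization, and since $R=\bigcup_i R_i$, the target $\next(\sky(P),x_0)$ is defined (i.e.\ $R\neq\emptyset$) if and only if at least one $q_i$ is defined.

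For the running time I would first compute each $q_i=\next(\sky(P_i),x_0)$. As noted earlier in the excerpt, when $\sky(P_i)$ is stored sorted by $x$-coordinate for binary searches, $\next(\sky(P_i),x_0)$ is found in $O(\log|\sky(P_i)|)=O(\log|P_i|)$ time, so computing all of them costs $O(\sum_i \log|P_i|)$. I would then scan the at most $t$ defined candidates $q_1,\dots,q_t$ once, maintaining the running $\prec$-maximum exactly as in the tie-breaking template described earlier, which takes $O(t)$ time. Adding the two contributions yields the claimed $O(t+\sum_i \log|P_i|)$ bound.

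The main obstacle here is bookkeeping rather than mathematics: one must thread the ``larger $x$'' tie-breaking rule consistently through both the per-part $\arg\max$ and the final combination, and treat the degenerate case of an empty $R_i$ (undefined $q_i$). Folding the tie-break into the single total order $\prec$ makes both issues transparent and reduces the correctness claim to the elementary fact that the maximum of a union is the maximum of the maxima.
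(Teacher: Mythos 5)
Your proof is correct and follows essentially the same route as the paper: both reduce the claim to the characterization of $\next(\sky(P),x_0)$ as the highest point of $P$ (ties toward larger $x(\cdot)$) in the halfplane $x>x_0$, observe that this maximum over $P=\bigcup_i P_i$ is the maximum of the per-part maxima, and then obtain the algorithm by one binary search per $\sky(P_i)$ followed by an $O(t)$ scan. Your packaging of the tie-break into a single total order $\prec$ is a slightly cleaner way to state what the paper handles implicitly, but it is not a different argument.
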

\begin{proof}
	The point $\next(\sky(P),x_0)$ is the highest point of $P$ in the halfplane $x>x_0$,
	breaking ties in favor of larger $x(\cdot)$.
	Since
	\begin{align*}
		y\bigl(\next(\sky(P),x_0)\bigr) &= \max \{ y(p)\mid p\in P,~ x(p)>x_0 \} \\
			& = \max_{i=1,\dots, t}\; \max \{ y(p_i)\mid p_i\in P_i,~ x(p_i)>x_0 \}\\
			& = \max_{i=1,\dots, t}\; y\bigl( \next(\sky(P_i),x_0) \bigr)\},
	\end{align*}
	the claim about the characterization of $\next(\sky(P),x_0)$ follows.
	The algorithmic claim is obtained by making a binary search in each $\sky(P_i)$
	to obtain $\next(\sky(P_i),x_0)$ and choosing the highest point; in case of ties
	we select the one with largest $x$-coordinate.
\end{proof}

We do not provide pseudocode for the simple algorithm of Lemma~\ref{le:next},
but the idea will be used inside pseudocode provided later.
We also have the following tool, which is slightly more complicated.
It tests whether a point belongs to the skyline and computes the
predecessor. 
Pseudocode under the name of {\sc TestSkylineAndPredecessor}
is described in Figure~\ref{fig:TestSkylineAndPredecessor}.

\begin{lemma}
\label{le:decisionmany}
	Assume that $P$ is the union of $P_1,\dots,P_t$
	and for each $i\in \{ 1,\dots, t\}$ the points of $\sky(P_i)$ are 
	sorted by $x$-coordinate for binary searches.
	For any point $p\in P$, the algorithm {\sc TestSkylineAndPredecessor}
	solves the following two problems in $O(t+ \sum_i \log |P_i|)$ time: 
	\begin{itemize}
		\item decide whether $p\in \sky(P)$;
		\item compute $\prev(\sky(P),x(p))$.
	\end{itemize}
\end{lemma}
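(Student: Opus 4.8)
The plan is to reduce both questions to two ``topmost/rightmost point in a region'' queries, each of which decomposes over $P_1,\dots,P_t$ exactly as in Lemma~\ref{le:next}. The conceptual core is a reformulation of skyline membership, best seen through the perturbation $p\mapsto p_\eps$ described above. Write $s$ for the highest point of $P$ with $x(\cdot)\ge x(p)$, breaking ties in favour of larger $x(\cdot)$. I claim that $p\in\sky(P)$ if and only if $s=p$. Indeed, $p$ fails to lie on the skyline exactly when some $q\ne p$ dominates $p_\eps$; unwinding the perturbation, the points $q$ with $x(q_\eps)\ge x(p_\eps)$ are precisely those with $x(q)>x(p)$ together with those with $x(q)=x(p)$ and $y(q)\ge y(p)$, and $p$ is dominated iff it is not the topmost point of this set. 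Since the points with $x(q)=x(p)$ and $y(q)<y(p)$ can never be topmost, this is the same as asking whether $p$ is topmost among all $q$ with $x(q)\ge x(p)$, i.e.\ whether $s=p$.

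First I would compute $s$. For each $i$, the highest point of $P_i$ in the closed halfplane $x\ge x(p)$ lies on $\sky(P_i)$: if it were dominated inside $P_i$, the dominating point would be an even higher candidate, contradicting maximality. Since $\sky(P_i)$ is sorted by increasing $x$ and hence decreasing $y$, this point is simply the leftmost element of $\sky(P_i)$ with $x(\cdot)\ge x(p)$, which one binary search locates in $O(\log|P_i|)$ time. Taking the highest of the $t$ candidates (ties broken by larger $x$) yields $s$ in $O(t+\sum_i\log|P_i|)$ time, and the membership test reduces to the single comparison $s=p$. As a byproduct this produces $y_0:=y(s)=\max\{y(p')\mid p'\in P,\ x(p')\ge x(p)\}$, exactly the quantity appearing in the characterization of $\prev(\sky(P),x(p))$ recalled before the lemma.

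Second, I would invoke that characterization, $\prev(\sky(P),x(p))=\arg\max\{x(p')\mid p'\in P,\ y(p')>y_0\}$ with ties broken by larger $y(\cdot)$, and compute it by the symmetric decomposition. For each $i$, the rightmost point of $P_i$ with $y(\cdot)>y_0$ again lies on $\sky(P_i)$ and is found by one binary search, and the rightmost of the $t$ candidates (ties broken by larger $y$) is the desired predecessor; the bookkeeping ensuring well-definedness is handled as in the discussion preceding the lemma. This phase again runs in $O(t+\sum_i\log|P_i|)$ time, so the whole procedure meets the claimed bound, and matching the two phases against the pseudocode of {\sc TestSkylineAndPredecessor} is then a mechanical check.

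The step I expect to require the most care is the correctness of the membership reformulation, namely that the single ``$x(\cdot)\ge x(p)$'' query correctly accounts for every way $p$ could be dominated, including points sharing the $x$-coordinate of $p$ and the boundary ties in both coordinates. Verifying this cleanly is precisely the role of the perturbation argument; once it is in place, both phases are routine binary searches combined exactly as in Lemma~\ref{le:next}.
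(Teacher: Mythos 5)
Your proposal is correct and follows essentially the same route as the paper's proof: you test membership by locating, via one binary search per $\sky(P_i)$, the highest point of $P$ in the halfplane $x\ge x(p)$ (ties to larger $x$) and comparing it with $p$, and you then compute the predecessor from the characterization $\prev(\sky(P),x(p))=\arg\max\{x(p')\mid p'\in P,\ y(p')>y_0\}$ using a second round of per-group binary searches in the $y$-coordinate, exactly as {\sc TestSkylineAndPredecessor} does. The only cosmetic difference is that you justify the membership reformulation with the perturbation argument, where the paper simply points to an adaptation of Lemma~\ref{le:next}; both give the same bound $O(t+\sum_i\log|P_i|)$.
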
	
\begin{proof}
	For each $i\in \{ 1,\dots, t\}$ we use a binary search along $\sky(P_i)$ to find
	the point $p_i$ of $\sky(P_i)$ with smallest $x$-coordinate among those with $x(p)\ge x(p_i)$.
	Let $p_0$ be the highest point among $p_1,\dots, p_t$, breaking ties in favor
	of those with larger $x$-coordinate.
	Note that $p_0$ is the highest point of $\sky(P)$ in $x\ge x(p)$;
	the proof of Lemma~\ref{le:next} can be trivially adapted for this.
	The point $p$ is in $\sky(P)$ if and only if $p=p_0$.

	Computing $\prev(P,x(p))$ is similar to computing $\next(P,x(p))$, if we exchange
	the roles of $x$ and $y$-coordinates. 
	See Figure~\ref{fig:next}, right, for an example.
	For each $i\in \{ 1,\dots, t\}$, we use a binary search along $\sky(P_i)$ 
	\emph{with respect to the $y$-coordinate} and key $y(p_0)$, 
	to find the point $q_i\in \sky(P_i)$ with 
	smallest $y$-coordinate among those with $y(q_i) > y(q_0)$. 
	The point among $q_1,\dots, q_t$ with largest $x$-coordinate (and breaking
	ties in favor of the largest $y$-coordinate) is $\prev(\sky(P),x(p))$.
	(As seen in the example of Figure~\ref{fig:next}, right,
	$\prev(\sky(P),x(p))$ is not necessarily any of the
	points among $\prev(\sky(P_1),x(p)),\dots, \prev(\sky(P_t),x(p))$.)
	
	Algorithm {\sc TestSkylineAndPredecessor} implements the idea of this proof.
\end{proof}

\begin{figure}
\centering
\fbox{~~~\parbox{0.89\linewidth}{
	\begin{algorithm}{{\sc TestSkylineAndPredecessor}}{
	\qinput{$\sky(P_1),\dots,\sky(P_t)$ stored for binary searches and a point $p$.} 
	\qoutput{A pair $(X,q)$ where $X$ is True if $p\in \sky(P_1\cup\dots \cup P_t)$ and False otherwise,
		and $q=\prev( \sky(P_1\cup\dots \cup P_t), x(p))$.}}
	\qfor $i=1,\dots, t$ \qdo\\
		$p_i\qlet$ point of $\sky(P_i)$ in $x\ge x(p)$ with smallest $x$-coordinate, using binary search
	\qrof\\
	$p_0\qlet$ highest point among $p_1,\dots,p_t$, breaking ties in favor of larger $x(\cdot)$\\
	\qfor $i=1,\dots, t$ \qdo\\
		$q_i\qlet$ point of $\sky(P_i)$ in $y> y(p_0)$ with smallest $y$-coordinate, using binary search
	\qrof\\
	$q_0\qlet$ rightmost point among $q_1,\dots,q_t$, breaking ties in favor of larger $y(\cdot)$\\
	\qreturn ($p=p_0?$, $q_0$)
	\end{algorithm}}}
	\caption{Algorithm described in Lemma~\ref{le:decisionmany}
		for testing whether $p\in \sky(P)$ and computing $\prev(\sky(P),x(p))$.}
\label{fig:TestSkylineAndPredecessor}
\end{figure}

%%%%%%%%%%%%%%%%%%%%%%%%%%%%%%%%%%%%%%%%%%%%%%%%%%%%%%%%%%%%%%%%%%%%%%%%%%%%%%%%%%%%%%%%%%%%%%%%%%%%%%%%%%%%%%%%%%%%%%%%%%%%%%%%%%%%%%%%%%%%%%%%%%%%%%%%%%%%%%%%%%%%%%%%%%%%%%%%%%%%%%%%%%%%%%%%%%%%%%%%%%%%%%%%%%%%%%%%%%%%%%%%%%%%%%%%%%%%%%%%%
\section{Computing the skyline optimally}
\label{sec:skyline}
In this section we show that the skyline of a set $P$ of $n$ points can
be computed in $O(n\log h)$, where $h=|\sky(P)|$. 
The algorithm is simple and reuses the ideas exploited
by Chan~\cite{Chan96} and Nielsen~\cite{Nielsen96}.
We include it because of its simplicity and because we will modify it later on.
Compared to Nielsen~\cite{Nielsen96}, our algorithm is slightly simpler because
we are computing a slightly simpler object.

It is well-known and very easy to see that $\sky(P)$ can be computed in $O(n\log n)$ time.
Indeed, we just need to sort the points lexicographically: a point $p$ precedes
a point $q$ if $x(p)< x(q)$ or if $x(p)=x(q)$ and $y(p)<y(q)$. 
Then we make a pass over the reversed list of sorted points, 
and maintain the point with largest $y(\cdot)$ scanned so far. The skyline of $P$ is the list
of local maxima we had during the procedure.
See Figure~\ref{fig:simple_algorithm} for the idea.
Details are given as Algorithm {\sc SlowComputeSkyline} in Figure~\ref{alg:SlowComputeSkyline}.
Note that the lexicographic order is important for correctness: 
if two points $p_i$ and $p_{i-1}$ have the same $x$-coordinate,
then the point $p_i$ has larger $y$-coordinate and thus $p_{i-1}$
does not make it to the skyline.
We will be using that $\sky(P)$ is returned sorted by $x$-coordinate.

\begin{figure}
\centering
	\includegraphics[page=3,scale=1.1]{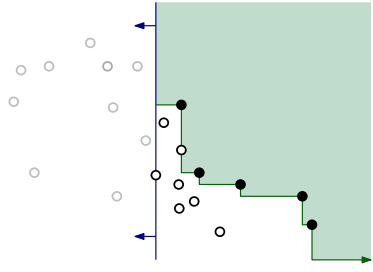}
	\caption{Idea in {\sc SlowComputeSkyline}.}
	\label{fig:simple_algorithm}
\end{figure}

\begin{figure}
\centering
\fbox{~~~\parbox{0.8\linewidth}{
	\begin{algorithm}{{\sc SlowComputeSkyline}}{
	\qinput{A set $P$ of $n$ points.} 
	\qoutput{$\sky(P)$ sorted by $x$-coordinate.}}
		$p_1,\dots,p_n \qlet$ sort $P$ lexicographically increasing\\
		$\output\qlet$ new list with $p_n$\\
		$p\qlet p_n$ \qquad (* last point added to $\output$ *)\\
		\qfor $i=n-1,\dots,1$ \qdo\\
			\qif $y(p_i)> y(p)$ \qthen\\
				$p\qlet p_i$\\
				append $p_i$ to $\output$
			\qfi
		\qrof\\
		\qreturn $\output$ \\(* we may reverse $\output$ if wished by increasing $x$-coordinate *)
	\end{algorithm}}}
	\caption{Algorithm to compute $\sky(P)$ in $O(n\log n)$ time.}
	\label{alg:SlowComputeSkyline}
\end{figure}

We now describe the algorithm computing $\sky(P)$ in $O(n \log h)$ time. 
The core of the procedure is a subroutine, called {\sc ComputeSkylineBounded}
and described in Figure~\ref{alg:ComputeSkylineBounded},
that takes as input $P$ and an integer value parameter $s\ge 1$.
The subroutine returns $\sky(P)$ when $h\le s$, and a warning ``\emph{incomplete}'',
when $h>s$. The role of $s$ is that of a rough guess for the size of $\sky(P)$.

The idea of the subroutine {\sc ComputeSkylineBounded} is as follows.
First we compute a value $M$ that is an upper bound on the absolute values of all coordinates.
We will use the points $(-M,M)$ and $(M,-M)$ as dummy points to avoid having boundary cases.
Let $\tilde P=P\cup \{ (-M,M), (M,-M) \}$.
We split $P$ into $t\approx n/s$ groups of points, $P_1,\dots, P_t$,
each with roughly $s$ points, and add the dummy points $(-M,M)$ and $(M,-M)$ to each group $P_i$.
We then compute the skyline of each group $P_i$
using an algorithm with time complexity $O(n\log n)$, for example {\sc SlowComputeSkyline}.
Each outcome $\sky(P_i)$ is stored in an array sorted by increasing $x$-coordinate,
so that we can perform binary searches along $\sky(P_i)$.

Now we can compute $\sky(\tilde P)=\sky(P)\cup \{ (-M,M), (M,-M) \}$.
Assuming that we are at a point $p$ of $\sky(\tilde P)$, 
we can compute the next point $p'=\next(\sky(\tilde P),x(p))$ along $\sky(\tilde P)$
by taking the highest among $\next(\sky(P_1),x(p)),\dots, \next(\sky(P_t),x(p))$;
ties are broken by taking the point with largest $x$-coordinate. 
The correctness of this is formalized in Lemma~\ref{le:next}.
The procedure is started from the dummy point $(-M,M)$, 
and iteratively compute the next point $s$ times. 
If at some point we reach the dummy point $(M,-M)$, we know that we computed
the whole $\sky(P)$ (we do not report $(-M,M)$ and $(M,-M)$ as part of the output).
If after $s+1$ iterations we did not arrive to the dummy point $(M,-M)$,
then we have computed at least $s+1$ points of $\sky(P)$
and correctly conclude that $|\sky(P)|> s$.

\begin{figure}
\centering%
\fbox{~~~\parbox{0.94\linewidth}{
	\begin{algorithm}{{\sc ComputeSkylineBounded}}{
	\qinput{A set $P$ of points, a positive integer $s$.} 
	\qoutput{skyline for $P$, if it contains at most $s$ points, ``\emph{incomplete}'' otherwise.}}
		$M \qlet 1+ \max \bigcup_{p\in P}\{|x(p)|,|y(p)| \}$\\
		$t \qlet \lceil n/s \rceil$\\
		split $P$ into $t$ groups $P_1,\dots,P_t$, each of at most $s$ points\\
		\qfor $i=1,\dots,t$ \qdo\\
			append the points $(-M,M)$ and $(M,-M)$ to $P_i$ \qquad (* dummy end points *)\\
			$S_i \qlet$ {\sc SlowComputeSkyline}($P_i$)\\
			store $S_i$ for binary searches along $x$-coordinate
		\qrof\\
		$\output\qlet$ new empty list\\
		$p\qlet (-M,M)$ \qquad (* dummy starting point *)\\
		\qrepeat $s+1$ times\\
			\qfor $i=1,\dots,t$ \qdo\\
				$p_i\qlet \next(S_i,x(p))$, using binary search
			\qrof\\
			$p\qlet$ highest point among $p_1,\dots, p_t$, breaking ties in favor of larger $x(\cdot)$\\
			\qif $x(p)=M$ \qthen (* we arrived to $p=(M,-M)$ *)\\
				\qreturn $\output$\\
			\qelse\\
				append $p$ to $\output$
			\qfi
		\qtaeper\\
		\qreturn ``\emph{incomplete}'' (* because skyline has more than $s$ points *)
	\end{algorithm}}}%
	\caption{Algorithm to compute $\sky(P)$ in $O(n\log s)$ time, if $h\le s$.}
	\label{alg:ComputeSkylineBounded}
\end{figure}

\begin{lemma}
	Algorithm {\sc ComputeSkylineBounded} has the following properties:
	\begin{itemize}
		\item {\sc ComputeSkylineBounded}$(P,s)$ returns $\sky(P)$, if $|\sky(P)|\le s$,
		\item {\sc ComputeSkylineBounded}$(P,s)$ returns ``\emph{incomplete}'', if $|\sky(P)|> s$,
		\item {\sc ComputeSkylineBounded}$(P,s)$ takes $O(n\log s)$ time.
	\end{itemize}
\end{lemma}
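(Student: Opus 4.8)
The plan is to first determine exactly which points the tracing loop visits, and then read off all three claims from that description. The key object is $\tilde P = P\cup\{(-M,M),(M,-M)\}$. Because $M>\max_{p\in P}\{|x(p)|,|y(p)|\}$, the point $(-M,M)$ has strictly smaller $x$-coordinate and strictly larger $y$-coordinate than every point of $P$, and symmetrically $(M,-M)$ has the strictly largest $x$ and the strictly smallest $y$. Hence neither dummy is dominated, neither dominates a real point, and $\sky(\tilde P)=\{(-M,M)\}\cup\sky(P)\cup\{(M,-M)\}$, listed by increasing $x$-coordinate with the two dummies as the extreme left and right elements. I would also recall that the points of any skyline have pairwise distinct $x$-coordinates (a lower point sharing an $x$-coordinate would be dominated), so $\next$ is unambiguous and advances along $\sky(\tilde P)$ exactly one point at a time.

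Next I would verify that the loop computes successors correctly. Each $P_i$ is augmented with the same two dummies, so their union is $\tilde P$ and $S_i$ stores $\sky(P_i\cup\{(-M,M),(M,-M)\})$. By Lemma~\ref{le:next}, one pass of the inner \texttt{for} loop --- taking the highest of $\next(S_1,x(p)),\dots,\next(S_t,x(p))$ and breaking ties toward larger $x$ --- returns precisely $\next(\sky(\tilde P),x(p))$. Starting from $p=(-M,M)$ and iterating, the successive values of $p$ are therefore $(-M,M),q_1,\dots,q_h,(M,-M)$, where $q_1,\dots,q_h$ are the points of $\sky(P)$ in increasing $x$-order. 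For $j\le h$ the $j$-th iteration assigns the real point $q_j$ (so $x(q_j)\ne M$ and $q_j$ is appended), while the $(h+1)$-st iteration assigns $(M,-M)$ and triggers the \texttt{return} of \output. Thus the loop performs exactly $\min(h+1,\,s+1)$ iterations.

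The first two claims are now immediate. If $h\le s$ then $h+1\le s+1$, so the loop reaches its $(h+1)$-st iteration, detects $x(p)=M$, and returns the list $q_1,\dots,q_h=\sky(P)$; the dummies are never output, since $(-M,M)$ is only the start value and $(M,-M)$ only triggers the early return. If $h>s$ then the first $s+1$ successors $q_1,\dots,q_{s+1}$ are all real points different from $(M,-M)$, the loop exhausts its $s+1$ repetitions without returning \output, and the algorithm falls through to return ``\emph{incomplete}''.

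For the running time I would charge each phase against $t=\lceil n/s\rceil$. Computing $M$ and splitting $P$ cost $O(n)$. The groups have at most $s+2$ points each and their sizes sum to $n+2t=O(n)$, so the $t$ calls to {\sc SlowComputeSkyline} cost $\sum_i O(|P_i|\log|P_i|)=O(n\log s)$. In the tracing loop each iteration does $t$ binary searches of cost $O(\log s)$ plus an $O(t)$ scan, i.e.\ $O(t\log s)$; since the number of iterations is $\min(h+1,s+1)\le\min(n,s)+1$ and $\min(n,s)\cdot\lceil n/s\rceil=O(n)$ (checking $s\le n$ and $s>n$ separately), the loop also costs $O(n\log s)$, giving $O(n\log s)$ overall. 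The one step I would be careful to spell out is exactly this loop bound when $s$ is large: written literally the loop may ``repeat $s+1$ times'', which on its own would suggest $O(s\log s)$, but the early \texttt{return} caps the actual number of iterations at $\min(h+1,s+1)\le n+1$, and that cap is what keeps the bound linear in $n$.
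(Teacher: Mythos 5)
Your proof is correct and follows essentially the same route as the paper: Lemma~\ref{le:next} justifies that each pass of the inner loop advances $p$ to $\next(\sky(\tilde P),x(p))$, the dummy points delimit the traversal, and the cost is split between the $t$ calls to {\sc SlowComputeSkyline} and the $O(t\log s)$ work per iteration. Your extra observation that the early return caps the number of iterations at $\min(h+1,s+1)$ is a welcome refinement — the paper simply multiplies by $s+1$ iterations, which only yields $O(n\log s)$ cleanly when $s\le n$, whereas your cap handles the regime $s>n$ (which does arise in {\sc OptimalComputeSkyline}, where $s$ can reach $h^2$) without any extra assumption.
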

\begin{proof}
	Because of Lemma~\ref{le:next}, in lines 11-13 we correctly replace $p$
	by $\next(\sky(\tilde P),x(p))$ in each iteration of the repeat loop.
	If we reach $p=(M,-M)$ at some point, then we know that we computed the whole $\sky(P)$.
	If after $s+1$ iterations we did not reach $p=(M,-M)$, then we computed $s+1$ points
	of $\sky(P)$ and the algorithm returns ``\emph{incomplete}''.
	
	It remains to bound the running time.
	For each $P_i$ we spend $O(s \log s)$ time to compute $\sky(P_i)$.
	This means that we spend $O((n/s)\cdot s \log s)= O(n \log s)$ time to compute 
	all $\sky(P_1), \dots,\sky(P_t)$.
	In each iteration of the repeat loop we make $O(t)=O(n/s)$ binary searches among 
	$O(s)$ elements, because each $S_i$ has $O(|P_i|)=O(s)$ elements.
	Thus, in one iteration we compute the points $p_1,\dots,p_t$ in $O(t\log s)=O((n/s)\log s)$ time,
	and selecting $p$ then takes additional $O(n/s)$ time.
	We conclude that each iteration takes $O((n/s)\log s)$ time, and since there are $s+1$ iterations,
	the time bound follows.
\end{proof}

Note that Algorithm {\sc ComputeSkylineBounded} can also be used to decide
whether $h\le s$ for a given $s$ in $O(n\log s)$ time.

To compute the skyline we make repetitive use of {\sc ComputeSkylineBounded}$(P,s)$ 
increasing the value of $s$ doubly exponentially: if $s$ is too small, 
we reset $s$ to the value $s^2$, and try again. The intuition is that
we want to make an exponential search for the value $\log h$ using terms of the form $\log s$. 
Thus, when we are not successful in computing the whole skyline because $s<h$, 
we want to set $s$ to double the value of $\log s$.
See the algorithm {\sc OptimalComputeSkyline} in Figure~\ref{alg:OptimalComputeSkyline}.

\begin{figure}
\centering
\fbox{~~~\parbox{0.6\linewidth}{
	\begin{algorithm}{{\sc OptimalComputeSkyline}}{
	\qinput{A set $P$ of points.} 
	\qoutput{skyline for $P$.}}
		$\output\qlet $``\emph{incomplete}''\\
		$s\qlet 4$\\
		\qwhile $\output = $``\emph{incomplete}''\qdo\\
			$\output \qlet$ {\sc ComputeSkylineBounded}$(P,s)$\\
			$s\qlet s^2$
		\qelihw\\
		\qreturn $\output$
	\end{algorithm}}}
	\caption{Algorithm to compute $\sky(P)$ in $O(n\log h)$ time.}
	\label{alg:OptimalComputeSkyline}
\end{figure}

\begin{theorem}
\label{thm:optimal}
	The skyline of a set of $n$ points can be computed in $O(n \log h)$ time,
	where $h$ is the number of points in the skyline.
\end{theorem}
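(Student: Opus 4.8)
The plan is to prove the theorem by analyzing Algorithm {\sc OptimalComputeSkyline} (Figure~\ref{alg:OptimalComputeSkyline}), treating the preceding lemma on {\sc ComputeSkylineBounded} as a black box. Correctness is immediate: the while loop repeatedly calls {\sc ComputeSkylineBounded}$(P,s)$ with a growing value of $s$ and exits precisely when a call returns something other than ``\emph{incomplete}''. By the lemma, a non-``\emph{incomplete}'' answer is produced exactly when the current $s$ satisfies $s\ge h$, and in that case the returned object is $\sky(P)$. Since $s$ strictly increases and the loop certainly succeeds once $s\ge h$, the algorithm halts and outputs $\sky(P)$.

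The heart of the argument is the running-time bound, and here the key is to track the sequence of values taken by $s$. Starting from $s=4$ and repeatedly squaring, the values are $s_1=4,\ s_2=16,\ s_3=256,\dots$, that is $s_j=2^{2^j}$, so that $\log s_j=2^j$ doubles at every iteration. The loop performs the calls for $s_1,s_2,\dots$ until the first index $J$ with $s_J\ge h$. By the lemma, the cost of the $j$-th iteration is $O(n\log s_j)=O(n\,2^j)$, so the total time is $\sum_{j=1}^{J} O(n\,2^j)$. Because $\log s_j$ doubles each step, this is a geometric series dominated by its last term, giving a total of $O(n\,2^J)=O(n\log s_J)$.

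It remains to bound $\log s_J$, and this is the only genuinely delicate point. Since $J$ is the \emph{first} index with $s_J\ge h$, the previous value satisfies $s_{J-1}<h$ (assuming $J\ge 2$), and hence $s_J=s_{J-1}^2<h^2$; therefore $\log s_J<2\log h=O(\log h)$. Substituting this into the previous paragraph yields a total running time of $O(n\log h)$, as claimed. The small cases need only a word: the very first call with $s=4$ already handles every instance with $h\le 4$ in $O(n\log 4)=O(n)$ time, which matches the stated bound under the usual convention that $\log h$ is read as $\Omega(1)$ (equivalently, the bound is $O(n\log(h+1))$).

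I expect the main obstacle to be exactly the realization that the doubly exponential growth of $s$ is what makes this an \textbf{exponential search for} $\log h$: one must resist bounding each iteration crudely and instead exploit that the per-iteration costs $\log s_j$ form a geometric progression, whose sum is of the same order as its final term, which is in turn provably $O(\log h)$ thanks to the overshoot bound $s_J<h^2$. Everything else reduces to invoking the lemma and summing a geometric series.
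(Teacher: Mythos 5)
Your proof is correct and follows essentially the same route as the paper: both analyze {\sc OptimalComputeSkyline} by observing that the trial values are $s=2^{2^r}$, that the per-iteration costs $O(n\log s)=O(n2^r)$ form a geometric series dominated by the final term, and that the final $s$ satisfies $s<h^2$ (equivalently $s^{1/2}<h\le s$), so the last term is $O(n\log h)$. Your explicit overshoot bound $s_J<h^2$ and the remark about the $h\le 4$ base case are just slightly more spelled-out versions of what the paper does.
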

\begin{proof}
	Consider the algorithm {\sc OptimalComputeSkyline} 
	in Figure~\ref{alg:OptimalComputeSkyline}.
	In the calls to {\sc ComputeSkylineBounded}$(P,s)$ we 
	have $s$ of the form $2^{2^r}$ for $r=1,2,3\dots$,
	until $h\le 2^{2^r}$. Thus, we finish when
	$r = \lceil \log_2(\log_2 h)\rceil$ and $s^{1/2} < h \le s$.
	Thus the algorithm uses
	\[
		\sum_{r=1}^{\lceil \log_2(\log_2 h)\rceil} O(n \log 2^{2^r})
		= \sum_{r=1}^{\lceil \log_2(\log_2 h)\rceil} O(n 2^r)
		= O\bigl( n 2^{\lceil \log_2(\log_2 h)\rceil} \bigr)
		= O(n \log h)
	\]
	time.
\end{proof}

A bit of thought shows that the exponent $2$ in Line 5
of {\sc OptimalComputeSkyline} can be replaced by any constant larger than $1$.
Thus, we could replace it by $s\qlet s^3$, for example.

%%%%%%%%%%%%%%%%%%%%%%%%%%%%%%%%%%%%%%%%%%%%%%%%%%%%%%%%%%%%%%%%%%%%%%%%%%%%%%%%%%%%%%%%%%%%%%%%%%%%%%%%%%%%%%%%%%%%%%%%%%%%%%%%%%%%%%%%%%%%%%%%%%%%%%%%%%%%%%%%%%%%%%%%%%%%%%%%%%%%%%%%%%%%%%%%%%%%%%%%%%%%%%%%%%%%%%%%%%%%%%%%%%%%%%%%%%%%%%%%%%%%%%%%%%%%%%%%%%%%%%%%%%%%%%
\section{Optimization via computation of the skyline}
\label{sec:optimization_skyline}
In this section we show how to solve the problem $\opt(P,k)$ in $O(n \log h)$ time,
independently of the value of $k$.
First we show how to solve the decision problem in linear time, assuming
that $\sky(P)$ is available, and then consider the optimization problem.

\subsection{Decision problem}
Assume that $\sky(P)$ is already available and sorted by increasing $x$-coordinate.

For any point $p\in \sky(P)$ and any $\lambda\ge 0$,
the \DEF{next relevant point} for $p$ with respect to $\lambda$,
denoted by $\nrp(p,\lambda)$, is
is the point $q$ of $\sky(P)$ that is furthest from $p$
subject to the constraints that $d(p,q)\le \lambda$ and $x(q)\ge x(p)$.
See Figure~\ref{fig:nrp}.
Note that the next relevant point may be $p$ itself, as it may be the only
point of $\sky(P)$ that satisfies the conditions.
Because of the monotonicity property of Lemma~\ref{le:monotone},
we can find $\nrp(p,\lambda)$ scanning $\sky(P)$ from $p$ 
onwards; the running time is proportional to the number of points
of $\sky(P)$ between $p$ and $\nrp(p,\lambda)$.
(One could also use an exponential search, which is significantly
better when $k\ll n$.)

\begin{figure}
\centering
	\includegraphics[page=4,scale=1.1]{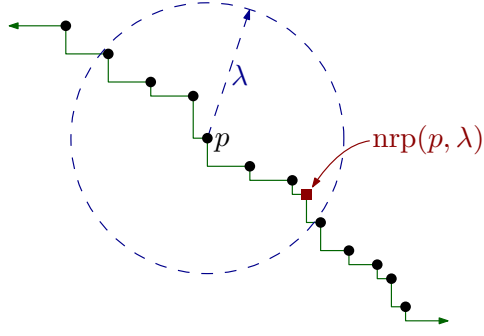}
	\caption{The next relevant point for $p$ with respect to $\lambda$ is indicated
		with a squared mark.}
	\label{fig:nrp}
\end{figure}

For a given $\lambda\ge 0$ we can decide in $O(h)$ time
whether $\opt(P,k)\le \lambda$.
The idea is a simple greedy approach that is used for 
several other ``1-dimensional'' problems: 
we iteratively compute next relevant points
to compute the centers.
Detailed pseudocode is in Figure~\ref{fig:DecisionSkyline1}.

\begin{lemma}
\label{le:decisionlinear}
	Given a skyline $S$ sorted by $x$-coordinate, an integer $k$ and a value $\lambda$, 
	we can decide in $O(h)$ time whether $\opt(S,k)\le \lambda$.
\end{lemma}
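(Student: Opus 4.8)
The plan is to analyze a left-to-right greedy that repeatedly covers the leftmost still-uncovered skyline point with the rightmost admissible center, and to argue that it uses the fewest possible centers; the decision then reduces to comparing this count against $k$. Let $S=(p_1,\dots,p_h)$ be the skyline sorted by increasing $x$-coordinate. I would maintain a pointer to the leftmost uncovered point $p$, initialized to $p_1$. To cover $p$ we must use a center $c\in S$ with $d(c,p)\le\lambda$, and I would choose $c:=\nrp(p,\lambda)$, the rightmost point of $S$ within distance $\lambda$ of $p$ lying to its right. Once $c$ is fixed, the points it covers are exactly those within distance $\lambda$ of $c$, which by Lemma~\ref{le:monotone} form a contiguous block of $S$ ending at $\nrp(c,\lambda)$; I then advance $p$ to $\next(S,x(\nrp(c,\lambda)))$ and repeat, counting centers. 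The algorithm reports $\opt(S,k)\le\lambda$ if and only if the number of centers produced is at most $k$.

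For correctness I would use a greedy-stays-ahead argument resting on two consequences of Lemma~\ref{le:monotone}: (i) the set of points within distance $\lambda$ of any skyline point is a contiguous block, so coverage behaves one-dimensionally; and (ii) among all admissible centers for the leftmost uncovered point $p$ (all points of $S$ within $\lambda$ of $p$), the rightmost one, namely $\nrp(p,\lambda)$, has coverage reaching at least as far right as any other admissible center. Concretely, I would show by induction on $i$ that after placing its $i$-th center the greedy's covered prefix extends at least as far right as the covered prefix of the first $i$ centers of any feasible solution, sorted by $x$. The base case uses that the leftmost point $p_1$ forces every feasible solution to contain a center within $\lambda$ of $p_1$, whose $x$-coordinate is at most $x(\nrp(p_1,\lambda))$; the inductive step pushes this inequality forward using (ii). It follows that the greedy never uses more centers than an optimum, so it computes the minimum number of radius-$\lambda$ disks centered on $S$ needed to cover $S$, and comparing this number with $k$ decides the problem.

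For the running time I would observe that both $\nrp(p,\lambda)$ and $\nrp(c,\lambda)$ are found by scanning $S$ rightward, and that across the whole execution the scan pointer is monotonically nondecreasing, since each new leftmost uncovered point lies strictly to the right of the previously covered block. Hence the total scanning work, and therefore the whole procedure, is $O(h)$.

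The main obstacle I anticipate is justifying the greedy choice, specifically fact (ii): one must verify that pushing the center as far right as admissible never loses coverage, which requires care because the center is itself constrained to lie on $S$ and the metric is Euclidean rather than literally one-dimensional. This is exactly where Lemma~\ref{le:monotone} is essential, as it makes ``further right'' equivalent to ``further in distance'' along the skyline (in both directions), collapsing the Euclidean picture to the familiar interval-covering setting in which greedy-stays-ahead is standard. A secondary point worth checking is the boundary behavior when $\nrp(p,\lambda)=p$, that is, when the next point to the right is already more than $\lambda$ away; the definition of $\nrp$ handles this cleanly by allowing the chosen center to coincide with $p$.
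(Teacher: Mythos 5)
Your proposal is correct and follows essentially the same route as the paper: the same greedy that covers the leftmost uncovered point with the center $\nrp(p,\lambda)$, extends coverage to $\nrp(c,\lambda)$, and charges the whole run to a monotone scan pointer for the $O(h)$ bound. The paper compresses your greedy-stays-ahead argument into the invariant that after $a$ iterations the covered prefix is the longest prefix coverable by $a$ disks, but the underlying justification (monotonicity from Lemma~\ref{le:monotone} in both directions) is the same.
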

\begin{proof}
	Consider the algorithm {\sc DecisionSkyline1} in Figure~\ref{fig:DecisionSkyline1}.
	The input is an array $S[1\dots h]$ describing the skyline sorted by $x$-coordinate.
	We use an index $i$ that scans the skyline $S[\dots]$.
	At the $a$th iteration of the for-loop, we have an index $\ell_a$
	and we find the largest indices $c_a,r_a$ such that 
	$S[c_a]=\nrp(S[\ell_a],\lambda)$ and 
	$S[r_a]=\nrp(S[c_a],\lambda)$.
	(We use $\ell_a$, $c_a$ and $r_a$ because of \emph{left}, \emph{center} and 
	\emph{right} for the $a$th cluster.)
	
	With a simple inductive argument we have the following property:
	after $a$ iterations of the repeat loop,
	the index $i$ is the maximum index with the property
	that $S[1\dots (i-1)]$ can be covered with $a$ disks of radius $\lambda$
	centered at points of $S$.
	
	The procedure finishes when we use $k$ centers but the last cluster does
	not cover $S[h]$ because $r_k<h$,
	or when we use at most $k$ centers and we are covering $S[r_a]=S[h]$ because $i=h+1$.
	In the former case we have $\opt(S,k)>\lambda$,
	while in the latter case we have $\opt(S,k)\le \lambda$ and $\output$
	has the list of centers.
	
	The running time is $O(h)$ because the index $i$ only increases and between
	consecutive increases we spend $O(1)$ time.
\end{proof}

\begin{figure}
\centering
\fbox{~~~\parbox{0.95\linewidth}{
	\begin{algorithm}{{\sc DecisionSkyline1}}{
	\qinput{A skyline given as an array $S[1\dots h]$ sorted by increasing $x$-coordinate, a positive integer $k\le h$ and 
		a real value $\lambda\ge 0$.} 
	\qoutput{A solution $Q\subseteq S$ with $|Q|\le k$ and $\psi(Q,S)\le \lambda$, if $\opt(S,k)\le\lambda$,
	and ``\emph{incomplete}'' if $\lambda < \opt(S,k)$.}}
	$\output\qlet$ new empty list\\
	$i\qlet 1$ (* counter to scan $S[\cdot]$ *)\\
	\qfor $a=1,\dots, k$ \qdo\\
		$\ell_a\qlet i$ (* $S[\ell_a]$ is the first point to be covered by the $a$th center *)\\
		(* find $\nrp(S[\ell_a],\lambda)$ *)\\
		\qwhile $d(S[\ell_a],S[i])\le \lambda$ \qand $i\le h$ \qdo\\
			$i\qlet i+1$
		\qelihw\\
		$c_a\qlet i-1$ \qquad (* $S[c_a]$ is the center for the $a$th cluster *)\\
		(* find $\nrp(S[c_a],\lambda)$ *)\\
		\qwhile $d(S[c_a],S[i])\le \lambda$ \qand $i\le h$ \qdo\\
			$i\qlet i+1$
		\qelihw \\
		$r_a\qlet i-1$ \qquad (* dummy operation for easier analysis *)\\
		append $S[c_a]$ to the list $\output$ \quad (* $S[c_a]$ is a center that covers $S[\ell_a],\dots, S[r_a]$*)\\
		\qif $i>h$ \qthen (* we finished scanning $S$ because $S[i-1]=S[h]$ *)\\
			\qreturn $\output$
		\qfi
	\qrof\\
	\qreturn ``\emph{incomplete}'' 
	\end{algorithm}}}
\caption{Decision algorithm for testing whether $\opt(S,k)\le \lambda$. 
	The index $a$ is superfluous but helps analyzing the algorithm.}
\label{fig:DecisionSkyline1}
\end{figure}

\subsection{Optimization problem}
Let us turn our attention now to the optimization problem
of computing $\opt(P,k)$ and an optimal solution.

We start computing $\sky(P)$ explicitly and storing it in an array $S[1 \dots h]$
sorted by increasing $x$-coordinate.
This takes $O( n\log h)$ time using the optimal algorithm presented in 
Theorem~\ref{thm:optimal} or the algorithms in~\cite{KirkpatrickS85,Nielsen96}.

Consider the $h\times h$ matrix $M$ defined by
\[
	[M]_{i,j} =\begin{cases}
			d(S[i],S[j]) & \text{if $i<j$,}\\ 
			-d(S[i],S[j]) & \text{if $i\ge j$.}\\
		\end{cases}
\]
Lemma~\ref{le:monotone} implies that $M$ is a \DEF{sorted matrix}: each row has increasing values
and each column has decreasing values. The matrix $M$ is not constructed explicitly,
but we work with it implicitly.

Note that $\opt(P,k)$ is one of the (non-negative) entries of $M$
because $\opt(P,k)$ is an interpoint distance between points in $\sky(P)$.
We perform a binary search among the entries of $M$ to find the largest
entry $\lambda^*$ in $M$ such that $\opt(P,k)\le \lambda^*$.
We then have $\opt(P,k) = \lambda^*$.

To perform the binary search, we use an algorithm for the \emph{selection problem}
in sets of real numbers: given a set $A$ and a natural number $a$ with $1\le a\le |A|$,
we have to return the element with rank $a$ when $A$ is sorted non-decreasingly.
Frederickson and Johnson~\cite[Theorem 1]{FredericksonJ84} show
that the selection problem for the entries of a sorted matrix $M$ 
of size $h\times h$ can be solved in $O(h)$ time looking at $O(h)$ entries of $M$.
This means that, for any given $a$ with $1\le a \le h^2$, we can select
in $O(h)$ time the $a$th element of the entries of $M$.

This implies that we can make a binary search among the elements of $M$ 
without constructing $M$ explicitly.
More precisely, we make $O(\log (h^2))=O(\log h)$ iterations of the binary search,
where at each iteration we select the $a$th element $\lambda_a$ in $M$, 
for a chosen $a$,
and solve the decision problem in $O(h)$ time using Lemma~\ref{le:decisionlinear}.
In total we spend $O(h \log h)$ time. It is easy to obtain
an optimal solution once we have $\lambda^*=\opt(P,k)=\opt(S,k)$ by calling
{\sc DecisionSkyline1}$(S,\lambda^*)$.
We summarize.

\begin{theorem}
\label{thm:optimization1}
	Given a set $P$ of $n$ points in the plane and an integer parameter $k\le h$,
	we can compute in $O(n \log h)$ time $\opt(P,k)$ and an optimal solution,
	where $h$ is the number of points in the skyline $\sky(P)$.
	If $\sky(P)$ is already available, then $\opt(P,k)$ and an optimal solution
	can be found in $O(h \log h)$ time.
\end{theorem}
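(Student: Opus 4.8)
The plan is to reduce the optimization problem to a short sequence of decision problems driven by a binary search over a structured set of candidate values. First I would compute $\sky(P)$ explicitly and store it in an array $S[1\dots h]$ sorted by increasing $x$-coordinate; by Theorem~\ref{thm:optimal} this costs $O(n\log h)$ time, and nothing when $\sky(P)$ is already available. The crucial observation is that $\opt(P,k)=\opt(S,k)$ is realized as the Euclidean distance between two points of the skyline, so it is one of the nonnegative entries of the pairwise-distance matrix $M$ defined above. Hence it suffices to locate the correct entry of $M$.

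Next I would exploit that $M$ is a sorted matrix. The sign convention in the definition of $[M]_{i,j}$, combined with the monotonicity of interpoint distances along the skyline (Lemma~\ref{le:monotone}), makes every row of $M$ increasing and every column decreasing. This is exactly the structure required by the Frederickson--Johnson selection algorithm~\cite[Theorem 1]{FredericksonJ84}, which returns the $a$th smallest among the $h^2$ entries of an $h\times h$ sorted matrix in $O(h)$ time while inspecting only $O(h)$ entries. This is what lets me binary-search over the values in $M$ without ever materializing $M$, which would cost $\Theta(h^2)$.

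The search would then run as follows. I maintain a range of ranks in $\{1,\dots,h^2\}$; at each step I select the median-rank entry $\lambda_a$ of $M$ in $O(h)$ time and test whether $\opt(S,k)\le \lambda_a$ using the linear-time decision routine of Lemma~\ref{le:decisionlinear}, again in $O(h)$ time. Depending on the answer I halve the rank range, and after $O(\log(h^2))=O(\log h)$ rounds I isolate the smallest entry $\lambda^\ast$ of $M$ for which the decision answers ``yes''. Because $\opt(S,k)$ is itself an entry of $M$, this threshold value equals $\opt(S,k)$. Each round costs $O(h)$, so the search is $O(h\log h)$ overall; a final call to {\sc DecisionSkyline1}$(S,k,\lambda^\ast)$ recovers an explicit optimal set of centers in $O(h)$ time.

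Adding up, the cost is $O(n\log h)$ to build the skyline plus $O(h\log h)$ for the search; since $h\le n$ this is $O(n\log h)$, and it drops to $O(h\log h)$ when $\sky(P)$ is supplied. I expect the only genuinely delicate point to be justifying the sorted-matrix property of $M$ precisely enough to invoke Frederickson--Johnson selection --- in particular verifying monotonicity across the sign change at the diagonal --- together with confirming that $\opt(S,k)$ is actually attained as an entry of $M$, so that the binary search over discrete entries lands exactly on the optimum rather than merely bracketing it.
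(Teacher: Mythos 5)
Your proposal is correct and follows essentially the same route as the paper: compute $\sky(P)$ in $O(n\log h)$ time, view $\opt(S,k)$ as an entry of the implicitly defined sorted distance matrix $M$, and binary-search over its entries using Frederickson--Johnson selection combined with the linear-time decision routine of Lemma~\ref{le:decisionlinear}, for $O(h\log h)$ additional time. The points you flag as delicate (the sorted-matrix property across the sign change and the fact that the optimum is attained as an entry of $M$) are exactly the ones the paper relies on, via Lemma~\ref{le:monotone} and the observation that $\opt(P,k)$ is an interpoint distance of $\sky(P)$.
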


An alternative approach is explained and justified in detail
in Frederickson and Zhou~\cite[Lemma~2.1]{abs-1711-00599}.
The bottom line is that we can perform a binary search in $M$ using 
$O(h)$ time and looking at $O(h)$ entries in $M$, plus
the time needed to solve $O(\log h)$ decision problems.
We provide a high-level description and refer to~\cite{abs-1711-00599}
for the details, including pseudocode.
The algorithm is iterative and maintains a family $\mathbb{M}$
of submatrices of $M$. We start with $\mathbb{M}=\{ M \}$.
At the start of each iteration we have a family $\mathbb{M}$
of submatrices of $M$ that may
contain the value $\lambda^*$, are pairwise disjoint, 
and have all the same size.
In an iteration, we divide each submatrix in $\mathbb{M}$
into four submatrices of the same size, and perform two pruning steps, 
each using a median computation and solving a decision problem. 
At each iteration the size of the submatrices in $\mathbb{M}$ halves 
in each dimension
and the number of submatrices in $\mathbb{M}$ at most doubles.
After $O(\log h)$ iterations, the family $\mathbb{M}$
contains $O(h)$ matrices of size $1\times 1$, and
we can perform a traditional binary search among those values.
Actually, the algorithm of Frederickson and Johnson~\cite{FredericksonJ84}
for the selection problem follows very much the same paradigm, where
a counter is used instead of a decision problem.

Both papers~\cite{FredericksonJ84,abs-1711-00599}
include pseudocode. For a practical implementation it is probably
better to replace the deterministic linear-time computation of medians
by a simple randomized linear-time computation~\cite[Section 2.4]{DasguptaPV}.

%%%%%%%%%%%%%%%%%%%%%%%%%%%%%%%%%%%%%%%%%%%%%%%%%%%%%%%%%%%%%%%%%%%%%%%%%%%%%%%%%%%%%%%%%%%%%%%%%%%%%%%%%%%%%%%%%%%%%%%%%%%%%%%%%%%%%%%%%%%%%%%%%%%%%%%%%%%%%%%%%%%%%%%%%%%%%%%%%%%%%%%%%%%%%%%%%%%%%%%%%%%%%%%%%%%%%%%%%%%%%%%%%%%%%%%%%%%%%%%%%%%%%%%%%%%%%%%%%%%%%%%%%%%%%%
\section{Optimization without computation of the skyline}
\label{sec:optimization_no_skyline}

We now describe another algorithm for the decision problem
and for the optimization problem where we do not have
$\sky(P)$ available, and we do not compute. 

\subsection{Decision problem}
Assume that we are given $P$, a positive integer $k$
and a real value $\lambda\ge 0$.
We want to decide whether $\opt(P,k)\le \lambda$.

We use the same idea that was exploited in the algorithm {\sc ComputeSkylineBounded}.
We split the point set into groups and compute the skyline of each group.
Then we note that we can use a binary search along the skyline of each group
to find the next relevant point on the skyline of each group.
First we provide a simple technical lemma and
then show how to compute the next relevant point along $\sky(P)$.
Details are provided as algorithm {\sc NextRelevantPoint} 
in Figure~\ref{fig:NextRelevantPoint}.

In our discussion we will be using the infinite curve $\alpha(p,\lambda)$  
obtained by concatenating the infinite upward vertical ray from $p+(\lambda,0)$,
the circular arc centered at $p$ and radius $\lambda$ from $p+(\lambda,0)$ to $p+(0,-\lambda)$
clockwise, and the vertical downward infinite ray from $p+(0,-\lambda)$.
See Figure~\ref{fig:alpha}. 
\begin{figure}
\centering
	\includegraphics[page=5,scale=1]{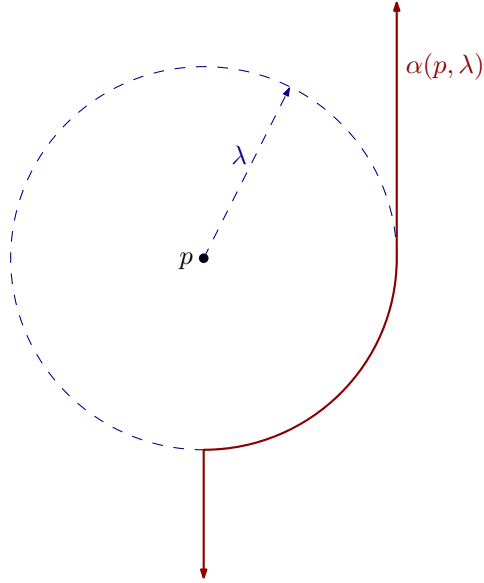}
	\caption{The curve $\alpha(p,\lambda)$.}
	\label{fig:alpha}
\end{figure}

\begin{lemma}
\label{le:binary}
	Given a skyline $Q$ sorted by $x$-coordinate such that $Q$ has points
	on both sides of $\alpha(p,\lambda)$, we can compute in $O(\log |Q|)$ time
	the point of $Q$ to the left of $\alpha(p,\lambda)$ and maximum $x$-coordinate
	and the point of $Q$ to the right of $\alpha(p,\lambda)$ and minimum $x$-coordinate. 
\end{lemma}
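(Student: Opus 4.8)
The plan is to use the fact that $\alpha(p,\lambda)$ meets every horizontal line in exactly one point, so it is the graph of a function $x=\xi(y)$, and then to show that the predicate ``lies to the right of $\alpha(p,\lambda)$'' is monotone along $Q$; this monotonicity is exactly what makes a binary search legitimate.

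First I would write the curve out as a function of $y$. Traversing $\alpha(p,\lambda)$ from top to bottom, the upper ray gives $\xi(y)=x(p)+\lambda$ for $y\ge y(p)$, the clockwise arc gives $\xi(y)=x(p)+\sqrt{\lambda^2-(y-y(p))^2}$ for $y(p)-\lambda\le y\le y(p)$, and the lower ray gives $\xi(y)=x(p)$ for $y\le y(p)-\lambda$. Thus each horizontal line meets $\alpha(p,\lambda)$ in a single point and $\xi$ is a (weakly) non-decreasing function of $y$: it is the constant $x(p)+\lambda$ at the top, decreases along the arc, and is the constant $x(p)$ at the bottom. A point $q$ lies to the right of $\alpha(p,\lambda)$ precisely when $x(q)>\xi(y(q))$, and given $q$ this can be decided in $O(1)$ time by first locating $y(q)$ in one of the three ranges and comparing; on the arc range one just compares $d(p,q)$ with $\lambda$, and on the ray ranges one compares $x(q)$ with $x(p)+\lambda$ or with $x(p)$.

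The key step is the monotonicity. Set $f(q)=x(q)-\xi(y(q))$, so that $q$ is to the right of $\alpha(p,\lambda)$ iff $f(q)>0$ and to the left iff $f(q)<0$. Since $Q$ is a skyline, its points have pairwise distinct $x$- and $y$-coordinates, and ordering $Q$ by increasing $x$ is the same as ordering by decreasing $y$. Hence for $q,q'\in Q$ with $x(q)<x(q')$ we have $y(q)>y(q')$, and
\[
    f(q')-f(q)=\bigl(x(q')-x(q)\bigr)-\bigl(\xi(y(q'))-\xi(y(q))\bigr)>0,
\]
because the first difference is positive while, as $\xi$ is non-decreasing and $y(q')<y(q)$, the second difference is non-positive. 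So $f$ is strictly increasing along $Q$, which means the points of $Q$ to the left of $\alpha(p,\lambda)$ form a prefix of the sorted order and those to the right form the complementary suffix.

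Given this prefix/suffix structure, the two requested points are the last element of the left prefix and the first element of the right suffix, which are adjacent in the sorted order; the hypothesis that $Q$ has points on both sides guarantees both are nonempty, so both points exist. Since $Q$ is stored for binary searches and the side test costs $O(1)$, one binary search over $Q$ locates this transition in $O(\log|Q|)$ time. I expect the monotonicity of $f$ — equivalently, that the staircase $Q$ crosses $\alpha(p,\lambda)$ only once — to be the only real content of the proof; the explicit form of $\xi$ and the binary search itself are routine.
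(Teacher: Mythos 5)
Your proof is correct and follows essentially the same route as the paper's: the paper simply asserts that the points of $Q$ to the left of $\alpha(p,\lambda)$ form a contiguous subsequence and that the side of a point can be decided in $O(1)$ time, then binary-searches for the transition, while you additionally supply the (easy) justification of contiguity by writing $\alpha(p,\lambda)$ as a non-decreasing graph $x=\xi(y)$ and observing that $x(q)-\xi(y(q))$ increases along the skyline order. One trivial slip: on the arc range the test ``$q$ is to the right iff $d(p,q)>\lambda$'' must also require $x(q)>x(p)$ (a far-away point with $x(q)\le x(p)$ and $y(q)$ in the arc range has $d(p,q)>\lambda$ yet lies to the left), but this does not affect the $O(1)$ claim or the rest of the argument.
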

\begin{proof}
	Note that the set of points from $Q$ to the left of $\alpha(p,\lambda)$
	forms a contiguous subsequence of $Q$ (sorted by $x$-coordinate). 
	Moreover, for any point $q$, we can decide in constant time whether it
	is to the left or the right of $\alpha(p,\lambda)$.
	Thus, we can perform a binary search along $Q$ to find where the change occurs.
\end{proof}

\begin{lemma}
\label{le:binarymany}
	Assume that $P$ is split into $P_1,\dots P_t$,
	and for each $i\in \{ 1,\dots, t\}$ the points of $\sky(P_i)$ are sorted 
	by $x$-coordinate for binary searches.
	Let $p$ be a point of $\sky(P)$.
	Then we can find the next relevant point $\nrp(p,\lambda)$
	in $O(t+ \sum_i \log |P_i|)$ time. 
\end{lemma}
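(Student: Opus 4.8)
The plan is to reduce the computation of $\nrp(p,\lambda)$ to a constant number of operations already available in our toolbox: one binary search of the type in Lemma~\ref{le:binary} per group, followed by a single call to {\sc TestSkylineAndPredecessor} from Lemma~\ref{le:decisionmany}. First I would reformulate the target. By Lemma~\ref{le:monotone}, the points of $\sky(P)$ that satisfy $d(p,\cdot)\le\lambda$ and $x(\cdot)\ge x(p)$ are exactly the points of $\sky(P)$ lying to the left of $\alpha(p,\lambda)$, and (as in Lemma~\ref{le:binary}) these form a prefix of $\sky(P)$ ordered by $x$-coordinate. Hence $\nrp(p,\lambda)$ is the rightmost point of $\sky(P)$ to the left of $\alpha(p,\lambda)$.

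For each group $i$ I would add the two dummy points used in {\sc ComputeSkylineBounded}, so that $\sky(P_i)$ has points on both sides of $\alpha(p,\lambda)$, and then invoke Lemma~\ref{le:binary} to obtain the point $a_i$ of $\sky(P_i)$ with maximum $x$-coordinate that lies to the left of $\alpha(p,\lambda)$. This costs $O(\sum_i \log |P_i|)$. Let $\hat q$ be the point of maximum $x$-coordinate among $a_1,\dots,a_t$, breaking ties in favor of larger $x(\cdot)$; since $p$ itself lies to the left of $\alpha(p,\lambda)$ and is on the skyline of its own group, $\hat q$ is well defined and $x(\hat q)\ge x(p)$. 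I would then run {\sc TestSkylineAndPredecessor} on $\hat q$ to decide whether $\hat q\in\sky(P)$ and to compute $\prev(\sky(P),x(\hat q))$, in $O(t+\sum_i\log|P_i|)$ time. The algorithm returns $\hat q$ if $\hat q\in\sky(P)$, and $\prev(\sky(P),x(\hat q))$ otherwise. Summing the costs gives the claimed bound.

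The correctness argument splits on whether $\hat q\in\sky(P)$. Every point of $\sky(P)$ to the left of $\alpha(p,\lambda)$ lies on the skyline of its own group (it is dominated by nothing in $P$, in particular nothing in its group), so its $x$-coordinate is at most $\max_i x(a_i)=x(\hat q)$. If $\hat q\in\sky(P)$, then $\hat q$ is itself such a point, so it is the rightmost one and $\hat q=\nrp(p,\lambda)$. If $\hat q\notin\sky(P)$, the same bound forces the true answer $\nrp(p,\lambda)$ to have $x$-coordinate strictly smaller than $x(\hat q)$; hence it equals $\prev(\sky(P),x(\hat q))$, \emph{provided} every point of $\sky(P)$ with $x(p)\le x(\cdot)<x(\hat q)$ lies to the left of $\alpha(p,\lambda)$.

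The step I expect to be the main obstacle is precisely this last proviso, which is a sharpening of Lemma~\ref{le:monotone}. I would prove: if $I$ lies to the left of $\alpha(p,\lambda)$ with $x(I)\ge x(p)$ and $y(I)\le y(p)$ (so $d(p,I)\le\lambda$), and $O$ lies to the right of $\alpha(p,\lambda)$ with $x(p)\le x(O)<x(I)$ and $y(O)\le y(p)$ (so $d(p,O)>\lambda$), then $I$ dominates $O$. Translating so that $p$ is the origin, write $X(\cdot)=x(\cdot)-x(p)\ge0$ and $Y(\cdot)=y(\cdot)-y(p)\le0$; if one had $y(O)>y(I)$, then $X(O)<X(I)$ gives $X(O)^2<X(I)^2$ and $Y(I)<Y(O)\le0$ gives $Y(O)^2<Y(I)^2$, so $d(p,O)^2<d(p,I)^2\le\lambda^2$, contradicting $d(p,O)>\lambda$; hence $y(O)\le y(I)$ and, together with $x(O)<x(I)$, $I$ dominates $O$. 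Applying this with $I=\hat q\in P$ shows that any point of $\sky(P)$ to the right of $\alpha(p,\lambda)$ with $x$-coordinate in $[x(p),x(\hat q))$ would be dominated by $\hat q$, hence cannot be on $\sky(P)$; this yields the proviso and completes the case $\hat q\notin\sky(P)$. The remaining routine matters — handling groups lying entirely on one side via the dummy points, and the tie-breaking conventions — follow the pattern already established for {\sc ComputeSkylineBounded} and {\sc TestSkylineAndPredecessor}.
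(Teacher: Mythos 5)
Your proposal is correct and runs within the stated time bound, but it resolves the key dichotomy from the opposite side compared to the paper. The paper computes, for each group, both the last point $q_i$ of $\sky(P_i)$ to the left of $\alpha(p,\lambda)$ and the first point $q'_i$ to its right, aggregates these into two candidates $q_0$ (rightmost of the $q_i$) and $q'_0$ (highest of the $q'_i$), and uses an empty-$L$-shaped-curve argument to show that at least one of ``$q_0$ is the true last point'' or ``$q'_0$ is the true first point'' must hold; it then tests whether $q'_0\in\sky(P)$ and returns $\prev(\sky(P),x(q'_0))$ or $q_0$ accordingly. You dispense with the right-side candidates entirely: you take $\hat q=q_0$, test whether $\hat q\in\sky(P)$, and return either $\hat q$ or $\prev(\sky(P),x(\hat q))$. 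The burden of correctness then falls on your domination lemma --- a point inside the disk of radius $\lambda$ around $p$ (in the quadrant $x\ge x(p)$, $y\le y(p)$) dominates every point of that quadrant outside the disk with smaller $x$-coordinate --- which is a correct quantitative sharpening of Lemma~\ref{le:monotone} and plays exactly the role of the paper's empty-region argument. Your route is slightly leaner (one family of binary searches instead of two, same single call to {\sc TestSkylineAndPredecessor}), at the price of this extra lemma. One genuine slip to repair: the tie among the $a_i$ attaining the maximum $x$-coordinate must be broken in favor of \emph{larger $y$-coordinate} (you wrote larger $x(\cdot)$, which is vacuous there). With the wrong tie-break, $\hat q$ could be a dominated point sharing its $x$-coordinate with the true answer, in which case $\prev(\sky(P),x(\hat q))$ skips past $\nrp(p,\lambda)$; with the correct tie-break one also gets for free that $x(\hat q)>x(p)$ whenever $\hat q\notin\sky(P)$, so the returned predecessor cannot fall to the left of $p$.
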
	
\begin{proof}
	For each $i\in \{ 1,\dots, t\}$,
	let $q_i$ and $q'_i$ be the points along $\sky(P_i)$ immediately 
	before and after $\alpha(p,\lambda)$, respectively.
	The points $q_i,q'_i$ can be computed with a binary search in $O(\log |P_i|)$ time;
	see Lemma~\ref{le:binary}.

	Let $q$ and $q'$ be the points of $\sky(P)$ immediately 
	before and after $\alpha(p,\lambda)$, respectively.
	The task is to find $q=\nrp(p,\lambda)$.
	Let $\gamma$ be the $L$-shaped curve connecting 
	$q$ to $(x(q),y(q'))$ and then to $q'$.
	See Figure~\ref{fig:binarymany}.
	Let $P_i$ be the group that contains $q$ and let $P_j$ be the group that contains $q'$.
	If the horizontal segment connecting $(x(q),y(q'))$ to $q'$ crosses $\alpha(p,\lambda)$,
	then $q'$ is the highest point to the right of $\alpha(p,\lambda)$ (breaking ties
	in favor of larger $x$-coordinates).
	It follows that $q'=q'_j$ and moreover $q'$ is the highest point among $q'_1,\dots,q'_t$
	(breaking ties in favor of larger $x$-coordinates).
	If the vertical segment connecting $q$ to $(x(q),y(q'))$ crosses $\alpha(p,\lambda)$,
	then there is no point of $P$ to the right of $x=x(q)$ and to the left of $\alpha(p,\lambda)$.
	If follows that $q=q_i$ and moreover $q$ is the rightmost point among $q_1,\dots,q_t$ (breaking
	ties in favor of larger $y$-coordinates). 
	
	\begin{figure}
	\centering
		\includegraphics[page=6,scale=1.1]{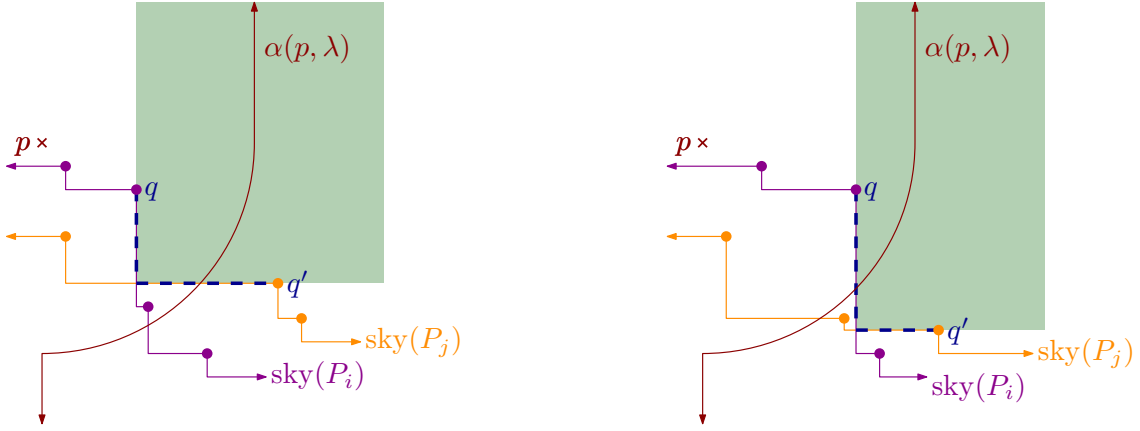}
		\caption{Cases in the proof of Lemma~\ref{le:binarymany}; 
			the shaded region is empty of points.
			Left: the case when the segment connecting $(x(q),y(q'))$ to $q'$ crosses $\alpha(p,\lambda)$.
			Right: the case when the segment connecting $q$ to $(x(q),y(q'))$ crosses $\alpha(p,\lambda)$.}
		\label{fig:binarymany}
	\end{figure}

	We have seen that $q$ is the rightmost point among $q_1,\dots,q_t$ 
	(breaking ties in favor of larger $y(\cdot)$) or
	$q'$ is the highest point among $q'_1,\dots,q'_t$, (breaking ties in favor of larger $x(\cdot)$),
	or both.
	As it can be seen in the examples of Figure~\ref{fig:binarymany}, 
	it may be that only one of the claims is true.
	
	Algorithmically we proceed as follows, using Lemmas~\ref{le:next}
	and~\ref{le:decisionmany} for some of the computations.
	We set $q_0$ to be the rightmost point among $q_1,\dots,q_t$ 
	(breaking ties in favor of larger $y(\cdot)$)
	and $q'_0$ to be highest point among $q'_1,\dots,q'_t$ 
	(breaking ties in favor of larger $x(\cdot)$).
	Using Lemma~\ref{le:decisionmany} we check whether $q'_0$ belongs to $\sky(P)$.
	If $q'_0$ belongs to $\sky(P)$, then we have $q'=q'_0$ and return 
	$\prev(\sky(P),x(q'_0))$, where we use Lemma~\ref{le:decisionmany}
	to compute $\prev(\,)$.
	If $q'_0$ does not belong to $\sky(P)$, then we have $q=q_0$ and return $q_0$.
	Details are provided in the algorithm {\sc NextRelevantPoint} 
	of Figure~\ref{fig:NextRelevantPoint}.
\end{proof}

\begin{figure}
\centering
\fbox{~~~\parbox{0.94\linewidth}{
	\begin{algorithm}{{\sc NextRelevantPoint}}{
	\qinput{$\sky(P_1),\dots,\sky(P_t)$ stored for binary searches, a point $p\in \sky(P_1\cup\dots \cup P_t)$
		and a real value $\lambda\ge 0$.} 
	\qoutput{$\nrp(p,\lambda)$ for $P=P_1\cup\dots \cup P_t$.}}
	\qfor $i=1,\dots,t$ \qdo\\
		$q_i\qlet$ last point in $\sky(P_i)$ to the left of or on $\alpha(p,\lambda)$, using binary search\\
		$q'_i\qlet \next(\sky(P_i),x(q_i))$ \qquad (* $q'_i$ to the right of $\alpha(p,\lambda)$ *)
	\qrof\\
	$q_0\qlet$ rightmost point among $q_1,\dots,q_t$, breaking ties in favor of larger $y(\cdot)$\\
	$q'_0\qlet$ highest point among $q'_1,\dots,q'_t$, breaking ties in favor of larger $x(\cdot)$\\
	$(X,r)\qlet$ {\sc TestSkylineAndPredecessor}$((\sky(P_1),\dots,\sky(P_t)),q'_0)$\\
	\qif $X=$ True\\
		\qthen \qreturn $r$\\
	\qelse\\ \qreturn $q_0$	
	\end{algorithm}}}
	\caption{Algorithm to find the next relevant point using  
		$\sky(P_1), \dots,\sky(P_t)$.}
\label{fig:NextRelevantPoint}
\end{figure}

Now we can solve the decision problem using at most $2k$ calls to the
function that finds the next relevant point.
As we did in previous cases, we add dummy points to avoid 
corner cases where the binary searches would have to return
null pointers.
Details are provided as algorithm {\sc DecisionSkyline2}
in Figure~\ref{fig:DecisionSkyline2}

For later use, we distinguish a preprocessing
part, which is independent of $\lambda$ and $k$,
and a decision part that depends on $\lambda$ and $k$. Indeed, later on we will be using
the preprocessing once and the decision part multiple times.

\begin{figure}
\centering
\fbox{~~~\parbox{0.94\linewidth}{
	\begin{algorithm}{{\sc DecisionSkyline2}}{
	\qinput{A set $P$ of points, positive integers $k$ and $\kappa\le n$
		and a real value $\lambda\ge 0$.} 
	\qoutput{A solution $Q\subseteq \sky(P)$ with $|Q|\le k$ and $\psi(Q,P)$, if $\opt(S,k)\le \lambda$,
	and ``\emph{incomplete}'' if $\lambda < \opt(S,k)$.}}
	{\bf (* Preprocessing *)}\\
	$p_0\qlet$ highest point of $P$, breaking ties in favor of larger $x(\cdot)$\\
	$q_0\qlet$ rightmost point of $P$, breaking ties in favor of larger $y(\cdot)$\\
	$\lambda_{\max}\qlet 1+d(p_0,q_0)$\qquad (* upper bound for $\opt(P,k)$ *)\\
	$M \qlet 2\lambda_{\max} + \max \bigcup_{p\in P}\{|x(p)|,|y(p)| \}$\\
	$t \qlet \lceil n/\kappa \rceil$\\
	split $P$ into $t$ groups $P_1,\dots,P_t$, each of at most $\kappa$ points\\
	\qfor $i=1,\dots,t$ \qdo\\
		append the points $(-M,M)$ and $(M,-M)$ to $P_i$ \qquad (* dummy end points *)\\
		$S_i \qlet$ {\sc SlowComputeSkyline}($P_i$)\\
		store $S_i$ in an array by increasing $x$-coordinate
	\qrof\\
	{\bf (* Decision *)}\\
	\qif $\lambda\ge \lambda_{\max}$ \qthen\\
		\qreturn $p_0$ (* or any other point of $\sky(P)$ *)
	\qfi\\
	$\output\qlet$ new empty list\\
	$\ell_1 \qlet p_0$ (* first non-dummy point of $\sky(P)$ *)\\
	\qfor $a=1,\dots, k$ \qdo\\
		$c_a\qlet$ {\sc NextRelevantPoint}$((S_1,\dots,S_t), \ell_a)$\\
		$r_a\qlet$ {\sc NextRelevantPoint}$((S_1,\dots,S_t), c_a)$\\
		append $c_a$ to the list $\output$ \qquad (* $c_a$ is a center that covers the portion of $\sky(P)$ from
			$\ell_a$ to $r_a$ *)\\
		(* compute $\ell_{a+1}=\next(\sky(P),x(r_a))$ *)\\
		\qfor $i=1,\dots,t$ \qdo\\
			$p_i\qlet \next(S_i,x(r_a))$, using binary search
		\qrof\\
		$\ell_{a+1}\qlet$ highest point among $p_1,\dots, p_t$, breaking ties in favor of larger $x(\cdot)$\\	
		\qif $x(\ell_{a+1})=M$ \qthen (* $\ell_{a+1}$ is the dummy point and we have finished *)\\
			\qreturn $\output$
		\qfi
	\qrof\\
	\qreturn ``\emph{incomplete}''  (* because $\lambda < \opt(S,k)$ *)
	\end{algorithm}}}
\caption{Decision algorithm for testing whether $\opt(S,k)\le \lambda$. 
	The index $a$ is superfluous but helps analyzing the algorithm.}
\label{fig:DecisionSkyline2}
\end{figure}

\begin{lemma}
\label{le:decisionfaster}
	Given a set $P$ of $n$ points and an integer parameter $\kappa\le n$,
	we can preprocess $P$ in $O(n\log \kappa)$ time such that,
	for any given real value $\lambda\ge 0$ and any positive integer $k$,
	we can decide in $O(k (n/\kappa)\log \kappa)$ time whether $\opt(S,k)\le \lambda$.
\end{lemma}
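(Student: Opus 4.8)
The plan is to verify that algorithm {\sc DecisionSkyline2} in Figure~\ref{fig:DecisionSkyline2} meets both stated time bounds and correctly decides the problem. I would split the argument into three pieces: the preprocessing time, the per-decision time, and correctness. The guiding idea is that {\sc DecisionSkyline2} runs exactly the same greedy strategy as {\sc DecisionSkyline1} (Lemma~\ref{le:decisionlinear}), but accesses $\sky(P)$ only implicitly, through the precomputed group skylines $S_1,\dots,S_t$, using {\sc NextRelevantPoint} and Lemma~\ref{le:next} as the elementary operations.

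For the preprocessing time, computing $p_0$, $q_0$, $\lambda_{\max}$ and $M$ takes $O(n)$ time, and splitting $P$ into the $t=\lceil n/\kappa\rceil$ groups also takes $O(n)$ time. Each group has at most $\kappa+2=O(\kappa)$ points, so {\sc SlowComputeSkyline} computes each $S_i$ in $O(\kappa\log\kappa)$ time; summing over the $t$ groups gives $O((n/\kappa)\cdot\kappa\log\kappa)=O(n\log\kappa)$, as claimed. For the decision time, each call to {\sc NextRelevantPoint} costs $O(t+\sum_i\log|P_i|)=O(t\log\kappa)=O((n/\kappa)\log\kappa)$ by Lemma~\ref{le:binarymany}, and the computation of $\ell_{a+1}=\next(\sky(P),x(r_a))$ costs the same by Lemma~\ref{le:next}. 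Since each of the at most $k$ iterations of the for-loop performs a constant number of such operations, the decision part runs in $O(k(n/\kappa)\log\kappa)$ time.

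The correctness argument is where I expect the real work to lie. First I would check the early exit: since $p_0$ (the highest point) is the leftmost point of $\sky(P)$ and $q_0$ (the rightmost point) is the rightmost point of $\sky(P)$, Lemma~\ref{le:monotone} gives $d(p_0,p)\le d(p_0,q_0)$ for every $p\in\sky(P)$, so a single disk of radius $d(p_0,q_0)$ centered at $p_0$ covers $\sky(P)$; hence $\opt(P,k)\le\opt(P,1)\le d(p_0,q_0)<\lambda_{\max}$, and the return when $\lambda\ge\lambda_{\max}$ is correct. In the main loop we then have $\lambda<\lambda_{\max}$, and the choice $M=2\lambda_{\max}+\max_{p\in P}\{|x(p)|,|y(p)|\}$ guarantees that the dummy point $(-M,M)$ lies strictly to the left of every curve $\alpha(p,\lambda)$ and $(M,-M)$ strictly to its right. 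This ensures that each $S_i$ has points on both sides of $\alpha(p,\lambda)$, so the binary searches inside {\sc NextRelevantPoint} are well defined, and it makes $(M,-M)$ a valid right sentinel whose detection (the test $x(\ell_{a+1})=M$) signals that all of $\sky(P)$ has been covered.

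With the boundary cases handled, it remains to port the greedy invariant of Lemma~\ref{le:decisionlinear} to the implicit setting. I would argue inductively that $\ell_1=p_0$ is the leftmost point of $\sky(P)$ and that each of $c_a=\nrp(\ell_a,\lambda)$, $r_a=\nrp(c_a,\lambda)$ and $\ell_{a+1}=\next(\sky(P),x(r_a))$ is again a point of $\sky(P)$, so the hypotheses of Lemmas~\ref{le:binarymany} and~\ref{le:next} are met at every call. By Lemma~\ref{le:monotone} the disk of radius $\lambda$ centered at $c_a$ covers precisely the contiguous block of $\sky(P)$ from $\ell_a$ to $r_a$ (the left endpoint because $d(c_a,\ell_a)\le\lambda$ together with monotonicity, the right endpoint by definition of $\nrp$), so $\ell_{a+1}$ is the furthest point the greedy can reach with $a$ centers. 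Consequently the loop reaches the sentinel within $k$ iterations if and only if $\opt(P,k)\le\lambda$, which is exactly the required correctness statement. The main obstacle, as noted, is the bookkeeping around the dummy points and confirming that every argument passed to {\sc NextRelevantPoint} genuinely lies on $\sky(P)$, since Lemma~\ref{le:binarymany} requires this; the greedy invariant itself transfers essentially verbatim from Lemma~\ref{le:decisionlinear}.
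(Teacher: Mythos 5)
Your proposal is correct and follows essentially the same route as the paper's proof: the same split into preprocessing cost $O(n\log\kappa)$, per-query cost $O(k(n/\kappa)\log\kappa)$ via Lemmas~\ref{le:binarymany} and~\ref{le:next}, and correctness by transferring the greedy invariant of Lemma~\ref{le:decisionlinear} using the fact that {\sc NextRelevantPoint} returns $\nrp(p,\lambda)\in\sky(P)$ whenever $p\in\sky(P)$. Your additional checks of the $\lambda_{\max}$ early exit and the role of the dummy points are sound and, if anything, slightly more explicit than the paper's own argument.
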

\begin{proof}
	See {\sc DecisionSkyline2} in Figure~\ref{fig:DecisionSkyline2}.
	Lines 1--11 correspond to the preprocessing, independent of $\lambda$ and $k$,
	while lines 12--27 correspond to the decision problem.
	The running time for the preprocessing is 
	$O(n+\sum_i |P_i|\log |P_i|)= O(n+ (n/\kappa) \kappa\log \kappa)= O( n \log \kappa)$.
	In the decision part we perform at most $k$ iterations,
	where each iteration takes $O(t)=O(n/\kappa)$ time plus the time for $O(1)$ calls
	to {\sc NextRelevantPoint} (Lemma~\ref{le:binarymany}) with $t=O(n/\kappa)$.
	Thus, each query takes
	$O(k)\cdot \bigl( O(n/\kappa)+ O((n/\kappa) \log \kappa) \bigr)$ time.
	The claim about the running times follow.
	
	Regarding correctness, the same argument that was used in the proof of
	Lemma~\ref{le:decisionlinear} applies.
	At the end of the $a$th iteration of the for-loop of lines 17--26
	we have the following properties:
	\begin{itemize}
		\item the points $c_1,\dots,c_a, \ell_1,\dots,\ell_a, r_1,\dots r_a$ 
			belong to $\sky(P)$;
		\item the point $r_a$ is the rightmost point of $\sky(P)$ with the property
			that the portion of $\sky(P)$ in $x\le x(r_a)$ can be covered
			with $a$ disks of radius $\lambda$ centered at points of $\sky(P)$;
		\item the disks centered at $c_1,\dots,c_a$ cover
			the portion of $\sky(P)$ in $x\le x(r_a)$.
	\end{itemize}
	This claim holds by induction because {\sc NextRelevantPoint}$(p,\lambda)$
	computes $\nrp(p,\lambda)\in \sky(P)$ whenever $p\in \sky(P)$; 
	see Lemma~\ref{le:binarymany}.
\end{proof}

Setting $\kappa=k$ in Lemma~\ref{le:decisionfaster} we obtain one of our main results,
where we see that computing the skyline, which takes $O(n\log h)$ time,
is not needed to solve the decision problem. The result is relevant
when $\log k = o(\log h)$; for example, when $k=\log h$.

\begin{theorem}
\label{thm:decisionfast}
	Given a set $P$ of $n$ points in the plane, a positive integer parameter $k$ and
	a real value $\lambda\ge 0$, 
	we can decide in $O(n \log k)$ time whether $\opt(P,k)\le \lambda$.
\end{theorem}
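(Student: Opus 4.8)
The plan is to derive this theorem as an immediate consequence of Lemma~\ref{le:decisionfaster}, choosing the group-size parameter $\kappa$ so that the preprocessing cost and the per-decision cost coincide; the natural choice is $\kappa=k$. Before invoking the lemma I would first clear away the degenerate regime $k\ge n$: there $h=|\sky(P)|\le n\le k$, so the entire skyline $Q=\sky(P)$ is itself a feasible solution with $\psi(Q,P)=0$, whence $\opt(P,k)=0\le\lambda$ and the answer is ``yes''. This case is settled in $O(n)$ time and, importantly, it ensures that in the remaining case $k<n$, so that the choice $\kappa=k$ respects the hypothesis $\kappa\le n$ of Lemma~\ref{le:decisionfaster}.

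In the main case $k<n$ I would simply run the preprocessing and the decision of Lemma~\ref{le:decisionfaster} with $\kappa=k$. The preprocessing, which splits $P$ into $t=\lceil n/k\rceil$ groups and computes and sorts the skyline of each group, costs $O(n\log\kappa)=O(n\log k)$. For the decision the lemma charges $O\bigl(k\,(n/\kappa)\log\kappa\bigr)$; substituting $\kappa=k$ produces the cancellation $k\cdot(n/k)=n$, so the decision costs
\[
	O\!\left(k\cdot\frac{n}{k}\cdot\log k\right)=O(n\log k).
\]
Summing the two phases gives the claimed $O(n\log k)$ bound. (For the boundary value $k=1$ this should be read as $O(n)$, which is in any case unavoidable since one must read $P$.)

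There is no real obstacle here beyond the bookkeeping above; the content lies entirely in having set up Lemma~\ref{le:decisionfaster} with a tunable $\kappa$ and in recognizing the right value. The balance is worth spelling out: every one of the $O(k)$ calls to {\sc NextRelevantPoint} made by {\sc DecisionSkyline2} must probe all $t=n/\kappa$ group-skylines, contributing the factor $k\cdot(n/\kappa)$, while each binary search inside a group costs $\log\kappa$; taking $\kappa=k$ simultaneously shrinks the number of groups enough to absorb the $k$ outer calls and keeps $\log\kappa=\log k$, which is exactly why both preprocessing and decision land on the single bound $O(n\log k)$, with no need to ever materialize $\sky(P)$ itself.
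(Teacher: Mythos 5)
Your proof is correct and takes exactly the paper's route: invoke Lemma~\ref{le:decisionfaster} with $\kappa=k$, so that preprocessing costs $O(n\log k)$ and the decision costs $O(k\cdot(n/k)\log k)=O(n\log k)$. The extra handling of the degenerate case $k\ge n$ is a harmless (and mildly welcome) bit of bookkeeping that the paper leaves implicit.
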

\begin{proof}
	We use the Algorithm of Lemma~\ref{le:decisionfaster},
	which is actually {\sc DecisionSkyline2} in Figure~\ref{fig:DecisionSkyline2},
	with $\kappa=k$.
	The preprocessing takes $O(n \log k)$ and deciding (the unique) $\lambda$
	takes 
	$O(k (n/k)\log k)= O(n\log k)$ time.
\end{proof}

%%%%%%%%%%%%%%%%%%%%%%%%%%%%%%%%%%%%%%%%%%%%%%%%%%%%%%%%%%%%%%%%%%%%%%%%%%%%%%%%%%%%%%%%%%%%%%%%%%%%%%%%%
\subsection{Optimization problem}
Now we turn to the optimization problem.
We use the paradigm of \emph{parametric search}; 
see for example~\cite{abs-1711-00599,Megiddo79,OostrumV04}.
Our presentation can be understood without a previous knowledge of the technique.

For convenience we set $\lambda^*=\opt(P,k)$.

We simulate running the decision algorithm {\sc DecisionSkyline2},
analyzed in Lemma~\ref{le:decisionfaster}, 
for the \emph{unknown} value $\lambda^*$.
Whenever we run into a computation that depends on the value of $\lambda^*$,
we apply a binary search to find an interval $(\lambda_1,\lambda_2]$
with the following properties:
(i) $(\lambda_1,\lambda_2]$ contains $\lambda^*$, and
(ii) the next step of the algorithm would be the same for any $\lambda\in (\lambda_1,\lambda_2]$.
We can then perform the next step of the algorithm for $\lambda^*$
as we would perform it for $\lambda_2$.

For the binary search we will use the following simple algorithm
to select the element with a given rank in a collection of sorted arrays.
As discussed in the proof,
the lemma is not optimal but it suffices for our purpose and
its proof is simple enough to implement it.

\begin{lemma}
\label{le:selection}
	Assume that we have $t$ arrays $S_1,\dots,S_t$, each with numbers 
	sorted increasingly. Set $S$ to be the union of the elements in $S_1,\dots, S_t$
	and let $n$ be the number of elements in the union.
	For any given value $\lambda^*$,
	we can find $\lambda'=\min \{ x\in S \mid x\ge \lambda^* \}$ 
	in $O(t\log^2 n)$ time plus the time needed to compare $O(\log n)$ times
	some element of $S$ with $\lambda^*$.
\end{lemma}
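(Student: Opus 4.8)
The plan is to select the element of rank equal to $|\{x\in S\mid x<\lambda^*\}|+1$ via a repeated-median pruning scheme, using comparisons against $\lambda^*$ only to locate $\lambda^*$ among candidate values. First I would maintain, for each array $S_i$, a contiguous window $[\text{lo}_i,\text{hi}_i]$ of indices that still may contain $\lambda'$; initially each window is the whole array, so the windows jointly contain all $n$ elements. The invariant I would preserve is that $\lambda'$ lies in $\bigcup_i S_i[\text{lo}_i\dots\text{hi}_i]$, i.e.\ every element strictly below $\lambda^*$ that remains is in some window and every window element that is $\ge\lambda^*$ is a candidate for the minimum such element. Each round shrinks the total window size by a constant fraction, so after $O(\log n)$ rounds the windows are small enough to finish directly.

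In each round I would proceed as follows. For each array $S_i$ pick its window median $m_i=S_i[\lfloor(\text{lo}_i+\text{hi}_i)/2\rfloor]$, giving at most $t$ candidate values, and compute a \emph{weighted} median $\mu$ of these $m_i$ where $m_i$ is weighted by the current window length $\text{hi}_i-\text{lo}_i+1$; this takes $O(t\log t)$ time by sorting, or $O(t)$ by a weighted-selection routine. I would then perform a single comparison of $\mu$ with $\lambda^*$. If $\mu<\lambda^*$, then for every array whose window median satisfies $m_i\le\mu$ I can discard the lower half of its window (those elements are $<\lambda^*$ and cannot be the minimum element $\ge\lambda^*$), advancing $\text{lo}_i$ past $\lfloor(\text{lo}_i+\text{hi}_i)/2\rfloor$. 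Symmetrically, if $\mu\ge\lambda^*$, for every array with $m_i\ge\mu$ I discard the upper half of its window. Because $\mu$ is the weight-balanced median of the $m_i$, the arrays whose medians lie on the pruned side account for at least a constant fraction (about a quarter) of the total remaining weight, and for each such array we delete half of its window; hence the total number of surviving elements shrinks by a constant factor each round. To locate $\mu$ within each array so as to know whether $m_i\le\mu$ (and where to cut), I would use a binary search in $S_i$, costing $O(\log n)$ per array and $O(t\log n)$ per round.

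Summing over the $O(\log n)$ rounds, the pruning work is $O(t\log n)$ per round from the binary searches plus $O(t)$ (or $O(t\log t)=O(t\log n)$) for the weighted median, giving $O(t\log^2 n)$ total time; the comparisons with $\lambda^*$ number one per round, hence $O(\log n)$ in total, exactly as claimed. When the windows collectively hold only $O(t)$ elements I would gather them, sort in $O(t\log t)$ time, and identify $\lambda'$ as the smallest surviving element that is $\ge\lambda^*$, which requires $O(\log t)=O(\log n)$ further comparisons against $\lambda^*$ by binary search; folding these into the $O(\log n)$ comparison budget keeps the bound intact.

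The main obstacle I anticipate is arguing that a single comparison per round genuinely suffices while still pruning a constant fraction of the surviving weight. The delicate point is that after comparing only the weighted median $\mu$ against $\lambda^*$, I must prune simultaneously across all arrays, yet the arrays have unequal window sizes; weighting the median computation by window length is precisely what guarantees that the side selected by the outcome of the comparison carries $\Omega(1)$ of the total weight, so that halving each such window yields the geometric decrease. I would also need to handle ties and the boundary interaction with $\lambda^*$ carefully (using $\ge$ versus $<$ consistently so that $\lambda'=\min\{x\in S\mid x\ge\lambda^*\}$ is never discarded), but these are routine once the weighting argument is in place.
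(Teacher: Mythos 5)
Your proposal is correct and follows essentially the same approach as the paper's proof: maintain an active window in each array, take the weighted median of the window medians, spend one comparison against $\lambda^*$ per round, and prune a constant fraction of the surviving weight, giving $O(\log n)$ rounds of $O(t\log n)$ work each. The only (immaterial) difference is that the paper clips \emph{every} active window against the weighted median via binary search, whereas you halve only the windows whose medians lie on the pruned side; both yield the claimed bounds.
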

\begin{proof}
	We use a recursive algorithm. For simplicity we describe the algorithm
	assuming that the elements of $S_1,\dots,S_t$ are pairwise different.
	Breaking ties systematically we can assume this; for example replacing
	each element $S_i[j]$ with the triple $(S_i[j],i,j)$ and making
	comparisons lexicographically.
	We keep an \emph{active} contiguous portion of each $S_i$;
	we store it using indices to indicate which subarray is active.
	Thus, the recursive calls only takes $2t$ indices as input.
	
	Let $m_i$ be the median of the active subarray of $S_i$;
	assign weight $w_i$ to $m_i$, where $w_i$ is the number of active elements in $S_i$.
	We can find $m_i$ and $w_i$ in $O(1)$ time using arithmetic of indices.
	We select the weighted median $M$ of $m_1,\dots,m_t$ in $O(t)$ time.
	We compare $M$ against $\lambda^*$ to decide whether $\lambda^* \le M$ or $M< \lambda^*$.
	If $\lambda^* \le M$,
	then we clip the active part of each subarray $S_i$ to the elements at most $M$.
	If $M< \lambda^*$,
	then we clip the active part of each subarray $S_i$ to the elements at least $M$.
	Then we call recursively to search for $\lambda^*$ in the active subarrays.

	Note that at some point the active subarray of an array may be empty.
	This does not affect the approach; we just skip that subarray (or use weight
	$0$ for that).
	Clipping one single subarray can be done in $O(\log n)$ time using a binary search.
	Each call to the function takes $O(t\log n)$ time, plus one comparison
	between an element of $S$ and $\lambda^*$, plus the time for recursive calls.
	Since at each call we reduce the size of the active arrays by at least $1/4$,
	we make $O(\log n)$ recursive calls.  The result follows.
	
	It should be noted that there is also a very simple randomized solution
	that probably works better in practice:
	at each iteration choose one entry uniformly at random 
	among all active subarrays of the arrays, 
	compare the chosen element to $\lambda^*$, and clip the active subarrays accordingly.
	One can do faster deterministically~\cite{FredericksonJ82,FredericksonJ84,Kaplan0ZZ19}, 
	but it does not matter in our application. 
\end{proof}

We next provide the parametric version of Lemma~\ref{le:binarymany};
the algorithm {\sc ParamNextRelevantPoint} in Figure~\ref{fig:ParamNextRelevantPoint}
is the parametric counterpart of algorithm {\sc NextRelevantPoint}.

\begin{figure}
\centering
\fbox{~~~\parbox{0.94\linewidth}{
	\begin{algorithm}{{\sc ParamNextRelevantPoint}}{
	\qinput{$\sky(P_1),\dots,\sky(P_t)$ given as arrays $S_1[1\dots h_1],\dots, S_t[1\dots h_t]$
		sorted by increasing $x$-coordinate,
		and a point $p\in \sky(P_1\cup\dots \cup P_t)$.} 
	\qoutput{$\nrp(p,\lambda^*)$ for $P=P_1\cup\dots \cup P_t$.}}
	\qfor $i=1,\dots,t$ \qdo\\
		$q_i\qlet S_i[h_i]$ \qquad (* last point of $\sky(P_i)$ *)\\
		$a_i \qlet$ number of elements $q\in \sky(P_i)$ with $x(q)< x(p)$\\
		(* $S_i[a_i+1,\dots, h_i]$ stores $\{ q\in \sky(P_i)\mid x(p)\le x(q) \}$ *).
	\qrof\\
	$q_0\qlet$ rightmost point among $q_1,\dots,q_t$, breaking ties in favor of larger $y(\cdot)$\\
	(* $q_0$ is rightmost point of $\sky(P)$ *)\\
	(* boundary cases for the binary search *)\\
	\qif {\sc DecisionSkyline2}$(P, k, \kappa, 0) \neq$``\emph{incomplete}'' \qthen\\
		\qreturn $p$
	\qfi\\
	\qif {\sc DecisionSkyline2}$(P, k, \kappa, d(p,q_0)) =$``\emph{incomplete}'' \qthen\\
		\qreturn ``\emph{impossible}'' (* this never happens in our application *)
	\qfi\\
	(* binary search in sorted lists $\bigcup_i \Lambda_i = \bigcup_i \{ d(p,q)\mid q\in S_i[a_i+1,\dots, h_i] \}$ *)\\
	use Lemma~\ref{le:selection} to find 
		$\lambda'= \min \{\lambda \in \bigcup_i \Lambda_i \mid \lambda^*\le \lambda\}$ \\
	\qreturn {\sc NextRelevantPoint}($(\sky(P_1),\dots,\sky(P_t)), p, \lambda')$
	\end{algorithm}}}
	\caption{Algorithm to find the next relevant point for the unknown $\lambda^*$ 
		using $\sky(P_1), \dots,\sky(P_t)$.}
\label{fig:ParamNextRelevantPoint}
\end{figure}

\begin{lemma}
\label{le:parametricdecision}
	Assume that $P$ is split into $P_1,\dots P_t$,
	and for each $i\in \{ 1,\dots, t\}$ the points of $\sky(P_i)$ are 
	sorted by $x$-coordinate for binary searches.
	For any given point $p\in \sky(P)$ 
	we can compute the next relevant point $\nrp(p,\lambda^*)$
	in $O(t\log^2 n)$ time plus the time
	needed to solve $O(\log n)$ decision problems, without knowing $\lambda^*$.
\end{lemma}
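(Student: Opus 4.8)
The plan is to establish this parametric statement by mirroring the structure of the non-parametric algorithm {\sc NextRelevantPoint} from Lemma~\ref{le:binarymany}, but replacing the single binary search that depends on $\lambda$ with the parametric selection machinery of Lemma~\ref{le:selection}. The key observation is that in the non-parametric algorithm, the only computation depending on $\lambda$ is finding, in each $\sky(P_i)$, the last point to the left of (or on) the curve $\alpha(p,\lambda)$. Since $p\in\sky(P)$ is fixed, for each candidate point $q$ with $x(q)\ge x(p)$ the question of whether $q$ lies to the left or right of $\alpha(p,\lambda)$ is governed precisely by comparing $d(p,q)$ against $\lambda$ (by the geometry of $\alpha$, and using monotonicity from Lemma~\ref{le:monotone} so that the relevant points form a contiguous suffix of each $\sky(P_i)$). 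Thus the ``breakpoints'' of the algorithm as a function of $\lambda$ are exactly the distances $d(p,q)$ over all $q\in\sky(P)$ with $x(q)\ge x(p)$, and these distances are sorted along each $\sky(P_i)$ by monotonicity.

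First I would, for each $i$, isolate the contiguous active suffix $S_i[a_i+1,\dots,h_i]$ of points with $x$-coordinate at least $x(p)$; the index $a_i$ is found by a plain binary search in $O(\log|P_i|)$ time, so this preliminary step costs $O(t+\sum_i\log|P_i|)=O(t\log n)$ overall and is cheap compared to the main term. Because of Lemma~\ref{le:monotone}, the list $\Lambda_i=\{d(p,q)\mid q\in S_i[a_i+1,\dots,h_i]\}$ is sorted increasingly, so the $\Lambda_i$ form exactly the $t$ sorted arrays required by Lemma~\ref{le:selection}. I would then invoke Lemma~\ref{le:selection} on these arrays to locate $\lambda'=\min\{\lambda\in\bigcup_i\Lambda_i\mid \lambda^*\le\lambda\}$, which by that lemma costs $O(t\log^2 n)$ time plus $O(\log n)$ comparisons of an element of $\bigcup_i\Lambda_i$ against the unknown $\lambda^*$. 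Each such comparison ``is $d(p,q)\ge\lambda^*$?'' is resolved not arithmetically but by solving a decision problem: since $\lambda^*=\opt(P,k)$, we have $d(p,q)\ge\lambda^*$ if and only if $\opt(P,k)\le d(p,q)$, which {\sc DecisionSkyline2} answers. This is precisely the parametric-search substitution, and it accounts for the ``$O(\log n)$ decision problems'' in the stated bound.

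The correctness argument is that the whole computation of $\nrp(p,\cdot)$ is a step function of $\lambda$ whose value changes only at the breakpoints $\bigcup_i\Lambda_i$; on the half-open interval between consecutive breakpoints the output is constant. Having found $\lambda'$, the value $\lambda^*$ lies in an interval $(\lambda'',\lambda']$ (with $\lambda''$ the next-smaller breakpoint) on which $\nrp(p,\cdot)$ is constant, so $\nrp(p,\lambda^*)=\nrp(p,\lambda')$, and I would obtain it by one final call to the non-parametric {\sc NextRelevantPoint} with parameter $\lambda'$, costing $O(t+\sum_i\log|P_i|)=O(t\log n)$ by Lemma~\ref{le:binarymany}. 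I would also dispatch the two boundary cases separately, as in {\sc ParamNextRelevantPoint}: if $\opt(P,k)=0$ the relevant point is $p$ itself, and if even $\lambda=d(p,q_0)$ (with $q_0$ the rightmost skyline point) is too small the situation is infeasible (which, as the pseudocode notes, does not arise when $p\in\sky(P)$ and the problem is feasible). Summing, the running time is $O(t\log^2 n)$ plus the time for $O(\log n)$ decision problems, as claimed.

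The step I expect to be the main obstacle is not the arithmetic but arguing cleanly that every $\lambda$-dependent branch of the underlying algorithm is resolved correctly once $\lambda^*$ is pinned to the interval $(\lambda'',\lambda']$, i.e.\ that $\lambda'$ genuinely captures all the breakpoints that matter. The subtlety is that the breakpoints relevant to a single invocation are exactly the interpoint distances from the fixed $p$, which are totally ordered by Lemma~\ref{le:monotone}; this is what makes a single application of Lemma~\ref{le:selection} suffice rather than forcing a comparison against an unsorted breakpoint set. Once this monotone structure is recognized, the reduction to sorted-array selection is immediate and the time bound follows mechanically.
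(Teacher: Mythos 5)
Your proposal is correct and follows essentially the same route as the paper's own proof: isolate the sorted suffixes $S_i[a_i+1,\dots,h_i]$, observe via Lemma~\ref{le:monotone} that the breakpoint sets $\Lambda_i$ are sorted, apply Lemma~\ref{le:selection} with comparisons resolved by calls to the decision algorithm, handle the two boundary cases, and finish with one call to the non-parametric {\sc NextRelevantPoint} at $\lambda'$. The step-function/breakpoint justification you give for $\nrp(p,\lambda^*)=\nrp(p,\lambda')$ matches the paper's argument, so there is nothing to add.
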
	
\begin{proof}
	Consider the algorithm {\sc ParamNextRelevantPoint} given 
	in Figure~\ref{fig:ParamNextRelevantPoint}.
	The input $\sky(P_i)$ is stored in an array $S_i[1\dots h_i]$.
	In line $3$ we perform a binary search along $S_i[\dots]$
	to find the index $a_i$ such that $S_i[1+a_i,\dots,h_i]$
	contains the points in $x\ge x(p)$.
	
	Let $\Lambda= \{ d(p,q)\mid q\in \sky(P),~ x(p)\le x(q)\}$
	and for each $i\in \{ 1,\dots, t\}$ let 
	$\Lambda_i=\{ d(p,q)\mid q\in \sky(P_i),~ x(p)\le x(q)\}$.
	Obviously we have $\Lambda \subseteq \bigcup_i \Lambda_i$
	and $0\in \Lambda$.
	The sets $\Lambda$ and $\Lambda_i$ are not constructed explicitly,
	but are manipulated implicitly.

	The boundary cases, when $\lambda^*=0=\min \Lambda$ and 
	when $\lambda^*> \max \Lambda$, are treated in lines 5--11.
	In the remaining case we have $0<\lambda^* \le \max \Lambda$.
	
	Consider any index $i\in \{ 1,\dots, t\}$.
	Since $p\in \sky(P)$, the point $p$ also belongs to $\sky(P_i\cup \{p\})$. 
	From Lemma~\ref{le:monotone} we then obtain that the distances from $p$
	increase along $\sky(P_i)$, assuming that we are on the same side 
	of the vertical line $x=x(p)$.
	It follows that 
	\[
		\forall j,j' \text{ with } a_i<j<j'\le h_i:~~ d(p,S_i[j])< d(p,S_i[j']).
	\]
	In short, the values of $\Lambda_i$, which are of the form $d(p,q)$,
	are sorted increasingly when $q$ iterates over $S_i[1+a_i,\dots,h_i]$.
	
	Since $\Lambda\subseteq \bigcup_i \Lambda_i$, we can perform
	a binary search in the union of the	$t$ sorted lists $\Lambda_1,\dots,\Lambda_t$
	to find the smallest element $\lambda'$ of $\bigcup_i \Lambda_i$ 
	with $\lambda^*\le \lambda'$.
	The next relevant point from $p$ for $\lambda^*$ and for $\lambda'$ is the same.
	Thus, we can return the next relevant point for $\lambda'$.

	To analyze the running time, we note that lines 1--5 perform $O(t)$
	binary searches, each with a running time of $O(\log n)$, and some other operations
	using $O(t)$ time. Thus, this part takes $O(t \log n)$ time.
	In lines 8--11 we are making two calls to the decision problem plus $O(1)$ time.
	In lines 12-13 we use Lemma~\ref{le:selection},
	which takes $O(t\log^2 n)$ time plus the time needed to solve 
	$O(\log n)$ decision problems.
\end{proof}

\begin{figure}
\centering
\fbox{~~~\parbox{0.94\linewidth}{
	\begin{algorithm}{{\sc ParametricSearchAlgorithm}}{
	\qinput{A set $P$ of points and a positive integer $k\le n^{1/4}$.} 
	\qoutput{A solution $Q\subseteq \sky(P)$ with $|Q|\le k$ and $\psi(Q,P)=\opt(P,k)$.}}
	{\bf (* Preprocessing -- same as in {\sc DecisionSkyline2} *)}\\
	$p_0\qlet$ highest point of $P$, breaking ties in favor of larger $x(\cdot)$\\
	$q_0\qlet$ rightmost point of $P$, breaking ties in favor of larger $y(\cdot)$\\
	$\lambda_{\max}\qlet 1+d(p_0,q_0)$\qquad (* upper bound for $\opt(P,k)$ *)\\
	$M \qlet 2\lambda_{\max} + \max \bigcup_{p\in P}\{|x(p)|,|y(p)| \}$\\
	$\kappa\qlet k^3 \log^2 n$\\
	$t \qlet \lceil n/\kappa \rceil$\\
	split $P$ into $t$ groups $P_1,\dots,P_t$, each of at most $\kappa$ points\\
	\qfor $i=1,\dots,t$ \qdo\\
		append the points $(-M,M)$ and $(M,-M)$ to $P_i$ \qquad (* dummy end points *)\\
		$S_i \qlet$ {\sc SlowComputeSkyline}($P_i$)\\
		store $S_i$ in an array by increasing $x$-coordinate
	\qrof\\
	{\bf (* Parametric part *)}\\
	$\output\qlet$ new empty list\\
	$\ell_1 \qlet p_0$ (* first non-dummy point of $\sky(P)$ *)\\
	\qfor $a=1,\dots, k$ \qdo\\
		$c_a\qlet$ {\sc ParamNextRelevantPoint}$((S_1,\dots,S_t), \ell_a)$\\
		$r_a\qlet$ {\sc ParamNextRelevantPoint}$((S_1,\dots,S_t), c_a)$\\
		append $c_a$ to the list $\output$ \qquad (* $c_a$ is a center that covers portion of $\sky(P)$ from
			$\ell_a$ to $r_a$ *)\\
		(* compute $\ell_{a+1}=\next(\sky(P),x(r_a))$ *)\\
		\qfor $i=1,\dots,t$ \qdo\\
			$p_i\qlet \next(S_i,x(r_a))$, using binary search
		\qrof\\
		$\ell_{a+1}\qlet$ highest point among $p_1,\dots, p_t$, breaking ties in favor of larger $x(\cdot)$\\	
		\qif $x(\ell_{a+1})=M$ \qthen (* $\ell_{a+1}$ is the dummy point and we have finished *)\\
			\qreturn $\output$
		\qfi
	\qrof
	\end{algorithm}}}
\caption{Algorithm to find an optimal solution to $\opt(P,k)$. 
	The index $a$ is superfluous but helps analyzing the algorithm.}
\label{fig:ParametricSearchAlgorithm}
\end{figure}

\begin{theorem}
\label{thm:optimization2}
	Given a set $P$ of $n$ points in the plane and an integer parameter $k$,
	we can compute $\opt(P,k)$ and an optimal solution in $O(n \log k + n \loglog n)$ time.
\end{theorem}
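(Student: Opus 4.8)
The plan is to establish the bound in two regimes. When $k> n^{1/4}$ I would simply compute $\sky(P)$ and solve the problem on it using Theorem~\ref{thm:optimization1}; since $k> n^{1/4}$ gives $\log k=\Omega(\log n)$, this costs $O(n\log h)=O(n\log n)=O(n\log k)$, which is within the claimed bound. It therefore remains to analyze the algorithm {\sc ParametricSearchAlgorithm} of Figure~\ref{fig:ParametricSearchAlgorithm} in the regime $k\le n^{1/4}$, the regime for which it is stated.

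For correctness I would observe that {\sc ParametricSearchAlgorithm} is precisely the greedy decision procedure {\sc DecisionSkyline2} run for the \emph{unknown} optimal value $\lambda^*=\opt(P,k)$, with each call to {\sc NextRelevantPoint} replaced by {\sc ParamNextRelevantPoint}. By Lemma~\ref{le:parametricdecision} every such call returns $\nrp(p,\lambda^*)$ correctly without knowing $\lambda^*$, so the inductive invariant from the proof of Lemma~\ref{le:decisionfaster} applies verbatim: after $a$ rounds, $r_a$ is the rightmost point of $\sky(P)$ whose prefix can be covered by $a$ disks of radius $\lambda^*$ centred on $\sky(P)$, and $c_1,\dots,c_a\in\sky(P)$ are valid centres. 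As $\lambda^*$ is feasible with $k$ centres, the loop halts within $k$ rounds, returning $Q=\{c_1,\dots\}$. Since greedy at $\lambda^*$ produces clusters of radius at most $\lambda^*$ we have $\psi(Q,P)\le\lambda^*$, and $\psi(Q,P)\ge\opt(P,k)=\lambda^*$ by optimality, so $\psi(Q,P)=\lambda^*$; the value itself is recovered in $O(k)$ time as $\max_a\max\{d(c_a,\ell_a),d(c_a,r_a)\}$, which by the same two inequalities equals $\lambda^*$.

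The heart of the argument, and the step I expect to be most delicate, is the running-time bookkeeping, where the choice $\kappa=k^3\log^2 n$ is what makes everything collapse to $O(n)$ beyond the preprocessing. Write $t=\lceil n/\kappa\rceil$. Because $k\le n^{1/4}$ we have $\kappa\le n$ for large $n$ (here $\log^2 n=o(n^{1/4})$ is what is used, and tiny $n$ degenerate harmlessly to $t=1$), and $\log\kappa=3\log k+2\loglog n=O(\log k+\loglog n)=O(\log n)$. The preprocessing coincides with that of Lemma~\ref{le:decisionfaster} and costs $O(n\log\kappa)=O(n\log k+n\loglog n)$, already matching the target. The parametric part issues $O(k)$ calls to {\sc ParamNextRelevantPoint}; by Lemma~\ref{le:parametricdecision} their direct cost sums to $O(kt\log^2 n)=O(n/k^2)=O(n)$, and they trigger $O(k\log n)$ decision problems in total. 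Each decision problem costs $O\bigl(k(n/\kappa)\log\kappa\bigr)$ by Lemma~\ref{le:decisionfaster}, so the decision work is $O\bigl(k\log n\cdot k(n/\kappa)\log\kappa\bigr)=O\bigl(n\log\kappa/(k\log n)\bigr)=O(n/k)=O(n)$. Finally, the per-round computation of $\ell_{a+1}=\next(\sky(P),x(r_a))$ via Lemma~\ref{le:next} contributes $O(kt\log\kappa)=O(n/k^2)=O(n)$. Summing, the parametric part runs in $O(n)$ time, so the whole algorithm runs in $O(n\log k+n\loglog n)$ time, as required.
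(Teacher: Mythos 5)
Your proposal is correct and follows essentially the same route as the paper: the same case split at $k$ versus $n^{1/4}$, the same reduction to {\sc ParametricSearchAlgorithm} with $\kappa=k^3\log^2 n$, and the same accounting of the direct cost of the parametric calls versus the $O(k\log n)$ triggered decision problems, each costed via Lemma~\ref{le:decisionfaster}. The arithmetic in your bookkeeping checks out and matches the paper's, so no further comment is needed.
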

\begin{proof}
	If $k\ge n^{1/4}$, then $\log k = \Theta(\log n)$ 
	and we can use Theorem~\ref{thm:optimization1} to obtain an 
	algorithm with running time $O(n\log h)=O(n \log n) = O(n\log k)$.

	For the case $k< n^{1/4}$, we
	consider the algorithm {\sc ParametricSearchAlgorithm} given 
	in Figure~\ref{fig:ParametricSearchAlgorithm}.
	In lines 1--12 we make exactly the same preprocessing as in 
	the algorithm {\sc DecisionSkyline2} with $\kappa= k^3 \log^2 n \le n$.
	As stated in Lemma~\ref{le:decisionfaster},
	the preprocessing takes $O(n \log \kappa)= O(n(\log k + \loglog n))$ time
	and each subsequent decision problem can be solved
	in time
	\[
		O(k(n/\kappa)\log \kappa) = O((n/k^2 \log^2 n) (\log k + \loglog n))=O(n/k\log n).
	\]
	
	The rest of the algorithm, lines 13--25, follows the paradigm of 
	the decision part of {\sc DecisionSkyline2}.
	We make $k$ iterations, where at each $a$th iteration we compute 
	points $\ell_a,c_a,r_a\in \sky(P)$ such that $c_a=\nrp(\ell_a,\lambda^*)$
	and $r_a=\nrp(c_a,\lambda^*)$.
	The point $\ell_{a+1}$ is then set to be the point after $r_a$ along $\sky(P)$.
	
	We are making $O(k)$ calls to the function {\sc ParamNextRelevantPoint}
	and, as shown in Lemma~\ref{le:parametricdecision}, each of them takes 
	$O(t\log^2 n)$ time plus the time needed to solve $O(\log n)$ decision problems.
	In total we spend in all calls to {\sc ParamNextRelevantPoint}
	\[
		O(k)\cdot \bigl( O((n/k^3 \log^2 n) \log^2 n) + O(\log n) \cdot O(n/k\log n)  \bigr)
		= O(n)
	\]
	time. In lines 21-23 we spend additional $O(k)\cdot O(t\log \kappa)=O(n)$ time.
	The claim about the running time follows.
	
	Correctness follows the same line of thought as for algorithm {\sc DecisionSkyline2} 
	(see Lemma~\ref{le:decisionfaster}), using that {\sc ParamNextRelevantPoint}
	correctly finds $\nrp(p,\lambda^*)$ for the unknown value $\lambda^*=\opt(P,k)$.
	
	Note that the algorithm finds an optimal solution, but, as written,
	does not return the value $\opt(P,k)$.
	Assuming, for simplicity, that the returned solution has $k$ points,
	we then have 
	\[
		\opt(P,k)= \psi(P, \{ c_1,\dots,c_k \}) = 
			\max \bigl( \bigcup_{a=1,\dots,k} \{ d(c_a,\ell_a),d(c_a,r_a)\} \bigr). 
	\]
	This quantity can be computed through the iterations of the for-loop.
\end{proof}

%%%%%%%%%%%%%%%%%%%%%%%%%%%%%%%%%%%%%%%%%%%%%%%%%%%%%%%%%%%%%%%%%%%%%%%%%%%%%%%%%%%%%%%%%%%%%%%%%%%%%%%%%%%%%%%%%%%%%%%%%%%%%%%%%%%%%%%%%%%%%%%%%%%%%%%%%%%%%%%%%%%%%%%%%%%%%%%%%%%%%%%%%%%%%%%%%%%%%%%%%%%%%%%%%%%%%%%%%%%%%%%%%%%%%%%%%%%%%%%%%%%%%%%%%%%%%%%%%%%%%%%%%%%%%%%%%%%%%%%%%%%%%%%%
\section{Algorithms for very small $k$}
\label{sec:smallk}
In this section we show that
the problem $\opt(P,k=1)$ can be solved in linear time; note that for
this running time we cannot afford to construct the skyline explicitly.
We also provide a $(1+\eps)$-approximation for $\opt(P,k)$ that is
fast when $k$ is very small.
The main tool is the following result.

\begin{lemma}
\label{le:bisector}
	Let $P$ be a set of $n$ points in the plane and let $p_0$ and $q_0$ be
	two distinct points of $\sky(P)$ with $x(p_0)<x(q_0)$.
	Consider the portion of the skyline 
	$S' = \{ p\in \sky(P)\mid x(p_0)\le x(p)\le x(q_0) \}$; this portion
	or the whole skyline is not available.
	In $O(n)$ time we can compute points $r_*$ and $r'_*$ of $S'$ such that
	\[
		r_* = \arg\min_{p\in S'}  ~\max\{ d(p,p_0),d(p,q_0) \} 
		~~\text{ and }~~
		r'_* = \arg\max_{p\in S'}  ~\min\{ d(p,p_0),d(p,q_0) \}. 
	\]
\end{lemma}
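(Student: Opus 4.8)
The plan is to first reduce the problem to locating just two points of $\sky(P)$, and then to find those two points in $O(n)$ time even though the skyline is not available.

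For the reduction I would exploit the monotonicity of Lemma~\ref{le:monotone}. Write $S'=(s_1,\dots,s_m)$ for the points of $S'$ in increasing $x$-order, so that $s_1=p_0$ and $s_m=q_0$. Applying Lemma~\ref{le:monotone} with the leftmost point $p_0$ and with the rightmost point $q_0$ shows that $a_i:=d(s_i,p_0)$ is strictly increasing and $b_i:=d(s_i,q_0)$ is strictly decreasing in $i$, with $a_1=0$ and $b_m=0$. Hence along $S'$ the function $\max\{a_i,b_i\}$ first decreases (while $b_i$ dominates) and then increases (while $a_i$ dominates), whereas $\min\{a_i,b_i\}$ does the opposite. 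Consequently both optima are attained at the ``straddling pair'': letting $B$ be the perpendicular bisector of $p_0q_0$, let $u$ be the last point of $S'$ with $d(u,p_0)\le d(u,q_0)$ (the last one on the $p_0$-side of $B$) and let $v$ be the next point of $S'$ (the first one on the $q_0$-side). A short case analysis then gives $r_*\in\{u,v\}$, with $r_*=u$ iff $d(u,q_0)\le d(v,p_0)$, and $r'_*\in\{u,v\}$, with $r'_*=u$ iff $d(u,p_0)\ge d(v,q_0)$. Thus, once $u$ and $v$ are known, $r_*$ and $r'_*$ are selected in $O(1)$ time.

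It remains to compute the consecutive skyline points $u$ and $v$. Note that $v=\next(\sky(P),x(u))$ and $u=\prev(\sky(P),x(v))$, each computable in $O(n)$ time through the characterizations of $\next$ and $\prev$ from Section~\ref{sec:basictools}; the difficulty is that I know neither $x(u)$ nor $x(v)$ in advance and cannot afford to build the skyline. The plan is to locate the pair by a linear-time prune-and-search for the place where the skyline crosses $B$. As $\xi$ increases over the $x$-range of $S'$, the point $\next(\sky(P),\xi)$ (the highest point of $P$ with $x$-coordinate exceeding $\xi$) moves monotonically down the skyline and switches exactly once from the $p_0$-side to the $q_0$-side of $B$; the first point on the $q_0$-side is $v$. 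I would maintain a shrinking candidate set $C$ together with a single ``summary'' point recording the highest point already discarded to the right. At each round I take the median $x$-coordinate $\xi$ of $C$, compute $\next(\sky(P),\xi)$ in $O(|C|)$ time from $C$ and the summary point, test in $O(1)$ time on which side of $B$ it lies, and recurse into the appropriate half of $C$ (folding the discarded right half into the summary point). Since $|C|$ halves each round, the total cost is $O(n)$, and the recursion terminates with $u$ and $v$.

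The main obstacle is precisely this last step. For a vertical separating line the relevant point is a clean argmax and its side is immediate, but here the separating curve $B$ is slanted, so the highest point on the $q_0$-side of $B$ need not lie on $\sky(P)$ (it can be dominated by a point on the $p_0$-side); this is why a single argmax does not suffice and a prune-and-search is needed. I would therefore be careful to (i) verify the monotone switching of $\next(\sky(P),\xi)$ across $B$, which follows from the skyline being $x$-monotone together with Lemma~\ref{le:monotone}, and (ii) handle the boundary cases (for instance $u=p_0$ or $v=q_0$, and exact incidences with $B$) by the infinitesimal-perturbation tie-breaking convention fixed in Section~\ref{sec:basictools}.
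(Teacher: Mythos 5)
Your proposal is correct, and its first half coincides with the paper's: both reduce the problem, via Lemma~\ref{le:monotone}, to the two skyline points straddling the perpendicular bisector of $p_0q_0$ (your $u,v$ are the paper's $p',q'$), and both observe that the slanted bisector is the real difficulty because the extremal point on either side of it need not lie on $\sky(P)$. Where you diverge is in how those two points are found in $O(n)$ time. You run a prune-and-search on the $x$-coordinates, using the monotonicity of the side-of-bisector indicator of $\next(\sky(P),\xi)$, with repeated median selection and a summary point absorbing the discarded right half; this is sound (the key invariants — that $u$ survives in the candidate set, that right-discarded points always lie beyond every future query threshold so a single highest-discarded point suffices, and that $v$ is recovered as the final summary — all check out), and the geometric decrease of the candidate set gives $O(n)$ total. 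The paper instead avoids recursion entirely: it defines $p_1$ as the rightmost point of $P$ on the $p_0$-side of the bisector and $q_1$ as the highest point of $P$ on the $q_0$-side, and uses the $L$-shaped-curve argument from Lemma~\ref{le:binarymany} to show that at least one of $p_1,q_1$ must lie on $\sky(P)$; a linear scan then tests whether $q_1\in\sky(P)$, and in either outcome the other straddling point is obtained by one more scan. The paper's route buys a constant number of passes over $P$ with no selection subroutine and no recursion bookkeeping; your route buys generality (it would work for any separating curve crossed monotonically by the skyline) at the cost of $O(\log n)$ rounds of median finding and a more delicate invariant. If you keep your version, do spell out the termination: that the loop ends with $C=\{u\}$ and the summary equal to $\next(\sky(P),x(u))=v$, since that is where the correctness of the extraction actually lives.
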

\begin{proof}
	We can assume that all the points $p$ of $P$ have
	$x(p_0)\le x(p)\le x(q_0)$ because other points can be ignored.
	Let $\beta$ be the bisector of $p_0q_0$; note that $\beta$ has positive slope and is not vertical.
	Let $p'$ be the point of $\sky(P)$ to the left of or on $\beta$ with largest $x$-coordinate.
	Let $q'$ be the point of $\sky(P)$ to the right of $\beta$ with smallest $x$-coordinate.
	We show that $\{ r_*,r'_*\}\subseteq \{ p',q' \}$
	and that $p',q'$ can be computed in linear time.
	The result follows because we can just evaluate $\max\{ d(p,p_0),d(p,q_0) \} $
	and $\min\{ d(p,p_0),d(p,q_0) \}$ for $p=p'$ and $p=q'$ and select the best.
	
	First we show that $r_*=p'$ or $r_*=q'$.
	Consider any point $p\in S'$ with $x(p)<x(p')$; we then have $x(p_0)\le x(p)<x(p')$.
	Because of Lemma~\ref{le:monotone} and because $p,p'$ are to the left of the bisector $\beta$ we have
	\begin{align*}
		\max \{ d(p,p_0), d(p,q_0) \} = d(p,q_0) > d(p',q_0) =  
			\max \{ d(p',p_0), d(p',q_0) \} ,
	\end{align*}
	which means that $p$ cannot be the optimal point $r_*$.
	A symmetric argument shows that any point $p\in S'$ with $x(q')<x(p)\le x(q_0)$ cannot be optimal.
	We conclude that $r_*$ is $p'$ or $q'$.
	A similar argument shows that $r'_*$ is also $p'$ or $q'$. 
	Indeed, for any point $p\in S'$ with $x(p_0)\le x(p)<x(p')$ we have 
	\begin{align*}
		\min \{ d(p,p_0), d(p,q_0) \} = d(p,p_0) < d(p',p_0) =  
			\min \{ d(p',p_0), d(p',q_0) \},
	\end{align*}
	and the argument for $p\in S'$ with $x(q')< x(p)\le <x(q_0)$ is similar.
	
	It remains to explain how to compute $p'$ and $q'$ in linear time.
	Let $p_1$ be the point to the left of or on $\beta$ with largest $x$-coordinate
	(breaking ties in favor of larger $y$-coordinate),
	and let $q_1$ be the point to the right of $\beta$ with largest $y$-coordinate
	(breaking ties in favor of larger $x$-coordinate).
	
	The same argument that was used in the proof of Lemma~\ref{le:binarymany}
	shows that $p_1\in \sky(P)$ or $q_1\in \sky(P)$ (or both). 
	We repeat the argument to make the proof self-contained. See Figure~\ref{fig:smallk}
	Consider the last point $p'\in \sky(P)$ to the left of or on $\beta$ and the first
	point $q'\in \sky(P)$ to the right of $\beta$.
	Define $\gamma$ as the $L$-shape curve connecting $p'$ to $(x(p'),y(q'))$ and to $q'$.
	If $\gamma$ crosses $\beta$ at the vertical segment connecting $p'$ to $(x(p'),y(q'))$,
	then the point $p'$ has to be the rightmost point to the left of or on $\beta$ and thus $p'=p_1$.
	If $\gamma$ crosses $\beta$ at the horizontal segment connecting $(x(p'),y(q'))$ to $q'$,
	then $q'$ has to be the topmost point to the right of $\beta$ and thus $q'=q_1$.

	\begin{figure}
	\centering
		\includegraphics[page=1,scale=1.1]{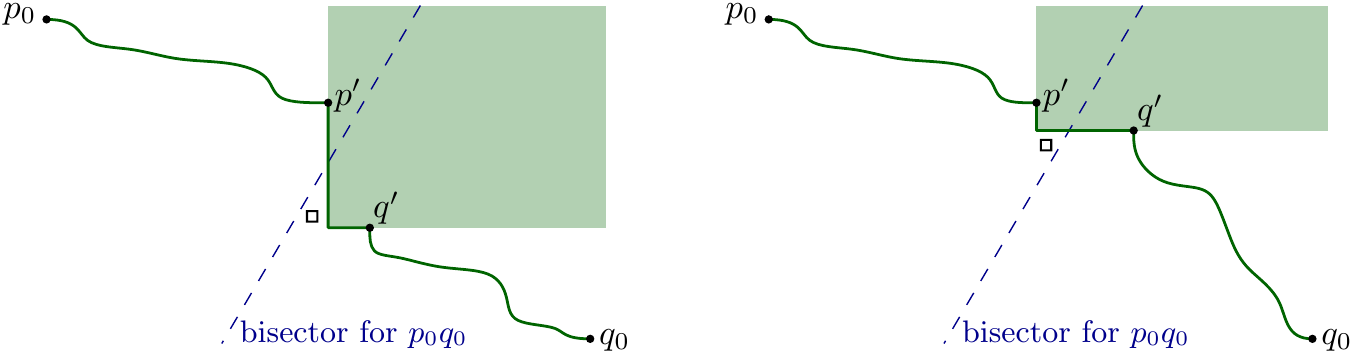}
		\caption{Cases in the proof of Lemma~\ref{le:bisector}; 
			the shaded region is empty of points.
			Left: the case when the segment connecting $p'$ to $(x(p'),y(q'))$ crosses $\beta$;
				the square represents a point showing that sometimes $q_1\notin \sky(P)$.
			Right: the case when the segment connecting $(x(p'),y(q'))$ to $q'$ crosses $\beta$;
				the square represents a point showing that sometimes $p_1\notin \sky(P)$.}
		\label{fig:smallk}
	\end{figure}

	We check whether $q_1$ belongs to $\sky(P)$ in linear time.
	Checking this amounts to checking whether $y(q_1)$ is the unique maximum
	among $y(q)$ for all the points $q\in P$ with $x(q_1)\le x(q)\le x(q_0)$.
	If $q_1\in \sky(P)$, then it must be $q_1=q'$ and 
	we can compute $p'$ using that
	\[
		x(p') = \max \{ x(p)\mid x(p)<x(q_1), ~y(p)>y(q_1) \}
	\]
	
	If $q_1 \notin \sky(P)$, we infer that $p_1\in \sky(P)$ and thus $p'=p_1$.
	We can compute the point $q'$ using that 
	\[
		y(q') = \max \{ y(p)\mid x(p_1) < x(p) \}.
	\]
	
	The whole procedure to find $p'$ and $q'$, as described, can be implemented to take linear time because 
	we only need $O(1)$ scans of the point set $P$.
\end{proof}

The $1$-center problem can be solved in linear time using Lemma~\ref{le:bisector}.

\begin{theorem}
\label{thm:k=1}
	Given a set $P$ of $n$ points in the plane,
	we can compute in $O(n)$ time $\opt(P,1)$ and an optimal solution.
\end{theorem}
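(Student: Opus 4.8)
The plan is to reduce $\opt(P,1)$ directly to Lemma~\ref{le:bisector} by first identifying the two extreme points of the skyline. First I would compute, in a single scan of $P$, the point $p_0$ that is highest (breaking ties in favor of larger $x(\cdot)$) and the point $q_0$ that is rightmost (breaking ties in favor of larger $y(\cdot)$). Both lie on $\sky(P)$: once ties are broken, nothing dominates the highest point, and likewise nothing dominates the rightmost. Moreover, since $\sky(P)$ sorted by $x$-coordinate has strictly decreasing $y$-coordinate (this is exactly what the proof of Lemma~\ref{le:monotone} uses), $p_0$ is the leftmost point of $\sky(P)$ and $q_0$ is its rightmost point. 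Both are found in $O(n)$ time.

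The key observation is that for the $1$-center problem the farthest skyline point from any candidate center is always one of these two extremes. Indeed, fix any $r\in\sky(P)$ and any $p\in\sky(P)$. If $x(p)\le x(r)$, then $x(p_0)\le x(p)\le x(r)$ and Lemma~\ref{le:monotone} gives $d(r,p)\le d(r,p_0)$; symmetrically, if $x(p)\ge x(r)$ then $d(r,p)\le d(r,q_0)$. Hence
\[
	\psi(\{r\},P) \;=\; \max_{p\in\sky(P)} d(r,p) \;=\; \max\{\, d(r,p_0),\, d(r,q_0)\, \}.
\]
Consequently $\opt(P,1) = \min_{r\in\sky(P)} \max\{\, d(r,p_0), d(r,q_0)\,\}$, which is precisely the quantity minimized by the point $r_*$ of Lemma~\ref{le:bisector} applied with this $p_0,q_0$, because here the portion $S'=\{\, p\in\sky(P)\mid x(p_0)\le x(p)\le x(q_0)\,\}$ is the whole skyline.

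Therefore I would first handle the degenerate case $p_0=q_0$ --- this occurs exactly when $p_0$ dominates all of $P$, so that $\sky(P)=\{p_0\}$ and the answer is $p_0$ with value $0$ --- and otherwise (so that $x(p_0)<x(q_0)$) invoke Lemma~\ref{le:bisector} to obtain $r_*$ in $O(n)$ time, returning $r_*$ as the optimal center and $\max\{\, d(r_*,p_0),d(r_*,q_0)\,\}$ as the value $\opt(P,1)$. Since locating $p_0,q_0$ and the single call to Lemma~\ref{le:bisector} each take $O(n)$ time, and we never materialize $\sky(P)$ explicitly, the total running time is $O(n)$. I do not expect a genuine obstacle here: all of the geometric work is already encapsulated in Lemma~\ref{le:bisector}, and the only thing requiring care is the reduction above, namely verifying that monotonicity forces the farthest point to be an endpoint (so the $\min$-$\max$ we need coincides with the definition of $r_*$) and that the stated tie-breaking makes $p_0$ and $q_0$ genuine skyline endpoints.
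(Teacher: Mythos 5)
Your proposal is correct and follows essentially the same route as the paper: identify the two skyline extremes $p_0$ and $q_0$ in a linear scan, use the monotonicity of Lemma~\ref{le:monotone} to argue that $\psi(\{r\},P)=\max\{d(r,p_0),d(r,q_0)\}$ for every $r\in\sky(P)$, handle the degenerate case $p_0=q_0$, and otherwise delegate the minimization to Lemma~\ref{le:bisector}. The only difference is that you spell out the reduction to the two endpoints in slightly more detail than the paper does, which is fine.
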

\begin{proof}
	Let $p_0$ be the point with largest $y$-coordinate 
	(breaking ties in favor of larger $x$-coordinate)
	and let $q_0$ be the point with the largest $x$-coordinate
	(breaking ties in favor of larger $y$-coordinate). 
	The points $p_0$ and $q_0$ are the
	extreme points of $\sky(P)$. Note that for each point $p\in \sky(P)$
	we have $\psi(\{ p \},P) = \max \{ d(p,p_0), d(p,q_0) \}$
	because of Lemma~\ref{le:monotone}.

	If $p_0=q_0$, we return $p_0$ as the solution,
	which has cost $0$.
	If $p_0\neq q_0$, we apply Lemma~\ref{le:bisector} to compute 
	\[
		r_* = \arg\min_{p\in \sky(P)}  ~\max\{ d(p,p_0),d(p,q_0) \} = \arg\min_{p\in \sky(P)}  ~\psi(\{ p \},P)
	\]
	in linear time, and return $r_*$.
\end{proof}		
	
We next provide a $2$-approximation for $\opt(P,k)$
that is relevant when $k$ is very small.
As soon as $k=\Omega(\loglog n)$, Theorem~\ref{thm:optimization2} is better, as it can
compute an optimal solution.

\begin{lemma}
\label{le:k}
	Given a set $P$ of $n$ points in the plane and a positive integer $k$,
	we can compute in $O(kn)$ time a feasible solution $Q\subseteq \sky(P)$ 
	with at most $k$ points such that $\psi(Q,P)\le 2\cdot\opt(P,k)$.
	In the same time bound we also get $\psi(Q,P)$.
\end{lemma}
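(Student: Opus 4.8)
The plan is to run the classical Gonzalez farthest-point heuristic, but restricted to skyline points and implemented through Lemma~\ref{le:bisector} so that each greedy step costs only $O(n)$ and we never build $\sky(P)$ explicitly.

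First I would compute the two extreme skyline points: $p_0$, the highest point of $P$ (ties broken by larger $x$), and $q_0$, the rightmost point (ties broken by larger $y$). These are the leftmost and rightmost points of $\sky(P)$, so every skyline point lies in the $x$-range $[x(p_0),x(q_0)]$. If $p_0=q_0$ the skyline is a single point and I return it with cost $0$, and if $k=1$ I invoke Theorem~\ref{thm:k=1}. Otherwise I initialize the center set with the forced pair $C=\{p_0,q_0\}$, so that no skyline point lies outside the range spanned by the current centers. The key structural fact is that, by the monotonicity of Lemma~\ref{le:monotone}, for a skyline point $p$ lying between two consecutive centers $c_i,c_{i+1}$ (in $x$-order) the nearest center of $C$ is one of $c_i,c_{i+1}$, so $\min_{c\in C} d(p,c)=\min\{d(p,c_i),d(p,c_{i+1})\}$. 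Consequently the point of the segment between $c_i$ and $c_{i+1}$ that is farthest from $C$ is exactly the maximizer $r'_*$ delivered by Lemma~\ref{le:bisector} applied to the pair $(c_i,c_{i+1})$, and its value is that segment's contribution to $\psi(\cdot,P)$.

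I would therefore keep the current segments in a max-priority queue keyed by this bottleneck value, each computed once by a single $O(n)$ call to Lemma~\ref{le:bisector}. Each of the remaining $k-2$ iterations pops the segment of largest bottleneck, inserts its $r'_*$ as a new center (splitting that segment in two), and recomputes the bottleneck of the two new segments with two further $O(n)$ calls; I stop early if the largest bottleneck is $0$, since the skyline is then covered exactly. Because the chosen $r'_*$ is the globally farthest skyline point from $C$, this is precisely Gonzalez's selection rule restricted to $\sky(P)$. The running time is $O(n)$ for the extremes plus $O(k)$ calls to Lemma~\ref{le:bisector} at $O(n)$ each, i.e.\ $O(kn)$, with the $O(k\log k)$ queue overhead absorbed; and $\psi(Q,P)$ is read off as the largest key remaining in the queue.

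The step I expect to require the most care is the $2$-approximation guarantee under the forced first two centers. I would run the standard Gonzalez argument: letting $r=\psi(Q,P)$, every later-added center was at distance at least $r$ from all earlier centers at the moment it was chosen, because the farthest-point distances are non-increasing along the run; moreover $d(p_0,q_0)\ge r$, since $r\le \max_{p\in\sky(P)} d(p,p_0)=d(p_0,q_0)$ as $p_0\in C$ and distances from $p_0$ grow monotonically to $q_0$. Together with the next candidate point this produces $k+1$ points that are pairwise at distance at least $r$, so by pigeonhole any feasible $k$-center solution must place two of them in a common cluster, giving $\opt(P,k)\ge r/2$ and hence $\psi(Q,P)\le 2\,\opt(P,k)$. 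The only delicate point is confirming that forcing $p_0,q_0$ as the initial centers does not destroy the pairwise-separation property, which is exactly the inequality $d(p_0,q_0)\ge r$ just noted; the remainder is the textbook analysis.
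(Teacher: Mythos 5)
Your proposal is correct and follows essentially the same route as the paper: Gonzalez's farthest-point greedy seeded with the two skyline extremes $p_0,q_0$, with each greedy step implemented by applying Lemma~\ref{le:bisector} to the slabs between consecutive centers. The only cosmetic differences are that you recompute just the two slabs affected by a split (via a priority queue) where the paper rescans all slabs each round, and that you spell out the Gonzalez $2$-approximation argument that the paper simply cites; both give the same $O(kn)$ bound.
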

\begin{proof}
	We assume $k\ge 2$ because the case $k=1$ is covered by Theorem~\ref{thm:k=1}.
	
	Let $p_0$ be the point with largest $y$-coordinate 
	(breaking ties in favor of larger $x$-coordinate)
	and let $q_0$ be the point with the largest $x$-coordinate
	(breaking ties in favor of larger $y$-coordinate). 
	The points $p_0$ and $q_0$ are the extreme points of $\sky(P)$.
	
	Set $c_1=p_0$, $c_2=q_0$, and for $i=3,\dots k$,
	let $c_i$ be the point of $\sky(P)$ that is furthest from $c_1,\dots,c_{i-1}$.
	A classical result by Gonzalez~\cite{Gonzalez85} included in 
	textbooks (for example~\cite[Section 2.2]{WilliamsonS})
	shows that $C=\{ c_1,\dots,c_k\}$ is a $2$-approximation:
	we have $\psi(C,P)\le 2\cdot \opt(P,k)$.
	
	The points in $C$ can be computed iteratively in $O(kn)$ time as follows.
	Assume that we computed $c_1,\dots,c_i$ and we want to compute $c_{i+1}$.
	The vertical lines though $c_1,\dots c_i$ split the region $x(p_0)\le x \le x(q_0)$ into
	slabs. We maintain for each slab $\sigma$ the points of $P$ that are inside $\sigma$.
	For each slab $\sigma$, let $c(\sigma)\in \{c_1,\dots,c_i \}$ be the point
	that defines its left boundary and let $c'(\sigma)\in \{c_1,\dots,c_i \}$ be
	the point defining its right boundary. Obviously $x(c)< x(c')$. 
	We can use Lemma~\ref{le:bisector} to compute the point
	\[
		r'_*(\sigma) = \arg\max_{p\in \sky(P)\cap \sigma}  ~\min\{ d(p,c(\sigma)),d(p,c'(\sigma)) \}.
	\]
	Because of Lemma~\ref{le:monotone}, for each slab $\sigma$ and each $p\in \sky(P)\cap \sigma$
	we have
	\[
		\max \{ d(p,c_1), d(p,c_2),\dots, d(p,c_i) \} = \max \{ d(p,c(\sigma)), d(p,c'(\sigma)).
	\]
	If follows that the point $c_{i+1}$ has to be one of the points $r'_*(\sigma)$, where $\sigma$
	iterates over all slabs defined by $c_1,\dots,c_i$.
	Thus we take $c_{i+1}$ to be the point achieving the maximum 
	\[
		\max_{r'_*(\sigma)} ~ \min\{ d(r'_*(\sigma),c(\sigma)),d(r'_*(\sigma),c'(\sigma)) \}.
	\]
	
	Let $\sigma^*$ be the slab defining $c_{i+1}$; thus $c_{i+1}=r'_*(\sigma^*)$.
	Then the slab $\sigma^*$ has to be split into two subslabs with the vertical line $x=x(c_{i+1})$,
	and the points of $P\cap \sigma^*$ have to be rearranged into the two new subslabs.
	Since the slabs are interior disjoint, the 
	use of Lemma~\ref{le:bisector} over all slabs takes linear time.
	The rest of the work to compute $c_{i+1}$ and update the split of $P$ into
	slabs also takes linear time. We conclude that the construction
	of $C$ takes $O(kn)$ time.
	
	Note that with an extra round, we can compute within each slab
	the point that is furthest from any point of $C$. This would be the steps to compute
	$c_{k+1}$. Computing the distance from $c_{k+1}$ to $C$ we obtain $\psi(C,P)$.
\end{proof}

From a $2$-approximation we can compute a $(1+\eps)$-approximation 
using binary search.
This is relevant when $k=o(\loglog n)$, as otherwise
Theorem~\ref{thm:optimization2} computes an optimal solution.

\begin{theorem}
\label{thm:k}
	Given a set $P$ of $n$ points in the plane, a positive integer $k$ and a real
	value $\eps$ with $0<\eps<1$,
	we can compute in $O(kn+ n \loglog (1/\eps))$ time a feasible solution $Q$ with 
	at most $k$	points such that $\psi(Q,P)\le (1+\eps) \opt(P,k)$.
\end{theorem}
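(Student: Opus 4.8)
The plan is to bracket $\opt(P,k)$ within a factor-$2$ interval using the $2$-approximation of Lemma~\ref{le:k}, and then to run an ordinary binary search that repeatedly calls the decision procedure of Lemma~\ref{le:decisionfaster}, with the group size $\kappa$ chosen so that the cost of the $\Theta(\log(1/\eps))$ decision queries is balanced against the one-time preprocessing.

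First I would dispose of the easy regimes. If $\eps\le 1/n$ or $k\ge\loglog n$, then Theorem~\ref{thm:optimization2} already computes $\opt(P,k)$ and an optimal solution exactly in $O(n\log k+n\loglog n)$ time, and this is $O(kn+n\loglog(1/\eps))$: indeed $n\log k=O(kn)$ always, while $n\loglog n=O(kn)$ when $k\ge\loglog n$, and $n\loglog n\le n\loglog(1/\eps)$ when $\eps\le 1/n$. Hence from now on I may assume $\eps>1/n$ and $k<\loglog n$, which guarantees $\kappa:=k\lceil\log(1/\eps)\rceil<2\log n\cdot\loglog n\le n$, so that $\kappa$ is an admissible parameter for Lemma~\ref{le:decisionfaster}.

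Next I would compute, via Lemma~\ref{le:k} in $O(kn)$ time, a feasible set $Q_0$ with $\mu:=\psi(Q_0,P)$ and $\opt(P,k)\le\mu\le 2\,\opt(P,k)$, so that $\opt(P,k)\in[\mu/2,\mu]$. I then preprocess $P$ once with group size $\kappa$ as in Lemma~\ref{le:decisionfaster}, at cost $O(n\log\kappa)$, and maintain a bracket $[\ell,r]\ni\opt(P,k)$ initialized to $[\mu/2,\mu]$. At each step I run the decision procedure on $P$ at the midpoint $\lambda=(\ell+r)/2$: if it reports $\opt(P,k)\le\lambda$ (returning a witness $Q$ with $\psi(Q,P)\le\lambda$) I set $r:=\lambda$ and remember $Q$, otherwise I set $\ell:=\lambda$. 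Since $\ell\ge\mu/2$ throughout and the additive width halves each step, after $\lceil\log_2(1/\eps)\rceil$ steps the width is at most $\eps\mu/2$, whence $r/\ell\le 1+(\eps\mu/2)/(\mu/2)=1+\eps$. The remembered witness $Q$ for the final value of $r$ then satisfies $\psi(Q,P)\le r\le(1+\eps)\ell\le(1+\eps)\opt(P,k)$, which is the desired feasible solution.

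It remains to bound the time. The search issues $O(\log(1/\eps))$ decision queries, each costing $O(k(n/\kappa)\log\kappa)$ by Lemma~\ref{le:decisionfaster}; since $k\log(1/\eps)\le\kappa$, the total query cost is
\[
	O\left(\log(1/\eps)\cdot \frac{kn}{\kappa}\cdot\log\kappa\right)=O(n\log\kappa),
\]
matching the preprocessing cost $O(n\log\kappa)$. As $\log\kappa=O(\log k+\loglog(1/\eps))$ and $n\log k=O(kn)$, the whole refinement runs in $O(kn+n\loglog(1/\eps))$, and adding the $O(kn)$ of Lemma~\ref{le:k} gives the claimed bound. The main thing to get right is exactly this choice of $\kappa$: keeping $\kappa=k$ (as in Theorem~\ref{thm:decisionfast}) would only yield the inferior $O(kn+n\log k\log(1/\eps))$, and it is the balancing of the single $O(n\log\kappa)$ preprocessing against the $\Theta(\log(1/\eps))$ queries that collapses the $\log(1/\eps)$ factor on $n\log k$ into an additive $n\loglog(1/\eps)$ term; the boundary regimes must be separated out both to keep $\kappa\le n$ and to fall back on the exact algorithm when approximation buys nothing.
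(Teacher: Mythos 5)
Your proof is correct and follows essentially the same route as the paper: a $2$-approximation via Lemma~\ref{le:k} to bracket $\opt(P,k)$ within a factor $2$, followed by $O(\log(1/\eps))$ calls to the decision procedure of Lemma~\ref{le:decisionfaster} with the group size $\kappa$ chosen to balance the one-time preprocessing against the queries (the paper binary-searches an arithmetic grid of $O(1/\eps)$ values with $\kappa=k^2\log^2(1/\eps)$ rather than bisecting the interval with $\kappa=k\lceil\log(1/\eps)\rceil$, but this is immaterial). Your explicit separation of the regimes $\eps\le 1/n$ and $k\ge\loglog n$ to guarantee $\kappa\le n$ is a small point of care that the paper's proof glosses over.
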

\begin{proof}
	We use Lemma~\ref{le:k} to compute a value $\lambda$ such
	that $\lambda \le 2 \opt (P,k) \le 2\lambda$. This takes $O(kn)$ time.
	Then we perform a binary search among the $O(1/\eps)$ values
	\[ 
		\lambda,\, \lambda+\tfrac{\eps\lambda}{2}, \, \lambda+2\tfrac{\eps\lambda}{2},\, 
		\lambda+3\tfrac{\eps\lambda}{2}, \, \lambda+4\tfrac{\eps\lambda}{2},
		\dots,\, \approx 2\lambda
	\]
	to find the index $j$ such that
	\[
		\lambda+(j-1)\tfrac{\eps\lambda}{2} ~<~ \opt(P,k) ~\le~ \lambda+j\tfrac{\eps\lambda}{2}.
	\]
	For this we have to solve $O(\log (1/\eps))$ decision problems.	
	Once we have the index $j$, we solve
	the decision problem for $\lambda+j\tfrac{\eps\lambda}{2}$ and return the feasible solution $Q$
	that we obtain.
	This is a $(1+\eps)$ approximation because
	\[
		\psi(Q,P) ~\le~ \lambda+j\tfrac{\eps\lambda}{2} ~=~
		\Bigl( \lambda+(j-1)\tfrac{\eps\lambda}{2} \Bigr) + \tfrac{\eps\lambda}{2} 
		<  \opt(P,k) + \eps  \opt(P,k).
	\]
	
	To analyze the running time of this step,
	we note that we have to solve $O(\log (1/\eps))$
	decision problems, all for the same number of points $k$.
	Using Theorem~\ref{le:decisionfaster} with 
	$\kappa = k^2 \log^2 (1/\eps)$, we spend 
	\[
		O\bigl( n \log (k^2 \log^2 \tfrac{1}{\eps} ) \bigr) + O(\log \tfrac{1}{\eps} ) \cdot O\Bigl(k \frac{n}{k^2\log^2 \tfrac{1}{\eps}} 
			\log (k^2 \log^2 \tfrac{1}{\eps})   \Bigr) 
		= O(n (\log k + \loglog \tfrac{1}{\eps}))
	\]
	time to solve the $O(\log (1/\eps))$ decision problems.
	The term $O(n\log k)$ is absorbed by $O(kn)$.
\end{proof}

The theorem implies that for any constant $k$ we can compute a $(1+\eps)$-approximation
in $O(n \loglog (1/\eps))$ time.

%%%%%%%%%%%%%%%%%%%%%%%%%%%%%%%%%%%%%%%%%%%%%%%%%%%%%%%%%%%%%%%%%%%%%%%%%%%%%%%%%%%%%%%%%%%%%%%%%%%%%%%%%%%%%%%%%%%%%%%%%%%%%%%%%%%%%%%%%%%%%%%%%%%%%%%%%%%%%%%%%%%%%%%%%%%%%%%%%%%%%%%%%%%%%%%%%%%%%%%%%%%%%%%%%%%%%%%%%%%%%%%%%%%%%%%%%%%%%%%%%%%%%%%%%%%%%%%%%%%%%%%%%%%%%%
\section{Discussion}
\label{sec:conclusions}
We have improved previous results for computing the distance-based representatives
of the skyline in the plane, a problem that is relevant in the context of databases,
evolutionary algorithms and optimization.
We have shown that such representatives can be computed
without constructing the skyline, which is a conceptual shift with respect to previous works and approaches.
For example, with a relatively simple algorithm
we can solve the decision problem in $O(n\log k)$ time, which is asymptotically optimal.
We also provided algorithms for the optimization problem that are asymptotically optimal 
for a large range of values of $k$:
Theorem~\ref{thm:optimization1} is optimal when $\log k=\Omega ( \log h)$ 
and Theorem~\ref{thm:optimization2} is optimal when $k= \Omega(\log n)$.

Most of our algorithms are easy enough to be implemented.
Specially simple is the new decision algorithm that does not require computing the skyline.
In several cases, the decision combined with a trivial binary search may suffice
to get reasonable results. A practical implementation would use
randomized algorithms for selection, instead of deterministic linear-time median finding.

One may wonder whether using the Euclidean distance is a reasonable choice. That
depends on the application. The approach can be modified easily to work with other 
distances, such as the $L_\infty$ or the $L_1$-metric.
The main property that we need is that any disk (in the chosen metric) centered
at a point of the skyline intersects the skyline in a connected subpiece.
From this we can infer monotonicity of the distances (Lemma~\ref{le:monotone})
and we can perform binary searches along the skyline.

We next describe two research directions where progress is awaiting.
The first question is whether we can compute $\opt(P,k)$ in $O(n\log k)$ for all values $k$.
It may be good to start considering the case when $k$ is constant.
For example, can we compute $\opt(P,15)$ in $O(n)$ time when $h=\Theta(\sqrt{n})$?

One can imagine scenarios where we want to solve $\opt(P,k)$
for several different values $k$. Can we exploit correlation
between the problems in a non-trivial way?
More precisely, what is the computational complexity of the following problem:
given a set $P$ of points in the plane and a set $K\subseteq \{ 1,\dots, n\}$,
compute $\opt(P,k)$ for all $k\in K$.

\bibliographystyle{plainurl}
\bibliography{biblio}
\end{document}